\newcommand{\ra}[1]{\renewcommand{\arraystretch}{#1}}
\begin{document}
\title{Efficient solution of structural default models with
correlated jumps and mutual obligations}

\author[1,2]{Andrey Itkin}
\author[1,3]{Alex Lipton}
\affil[1]{\small Bank of America Merrill Lynch}
\affil[2]{\small New York University, School of Engineering}
\affil[3]{\small Oxford-Man Institute of Quantitative Finance, University of Oxford}

\date{\today}

\maketitle

\blfootnote{The views represented herein are the authors' own views and do
not necessarily represent the views of BAML or its affiliates and are not a product of BAML Research.}

\begin{abstract}
The structural default model of \cite{LiptonSepp2009a} is generalized for a
set of banks with mutual interbank liabilities whose assets are driven by
correlated L\'{e}vy processes with idiosyncratic and common components. The
multi-dimensional problem is made tractable via a novel computational
method, which generalizes the one-dimensional fractional partial
differential equation method of \cite{Itkin2014} to the two- and
three-dimensional cases. This method is unconditionally stable and of the second order of
approximation in space and time; in addition, for many popular L\'{e}vy models it has linear complexity in each dimension. Marginal and joint survival probabilities for two and three banks with
mutual liabilities are computed. The effects of mutual liabilities are
discussed, and numerical examples are given to illustrate these effects.
\end{abstract}

%%%%%%%%%%%%%%%%%%%%%%%%%%%%%%%%%%%%%%%%%%%%%%%%%
\section{Introduction}
%%%%%%%%%%%%%%%%%%%%%%%%%%%%%%%%%%%%%%%%%%%%%%%%%
Structural default framework is widely used for assessing credit risk of
rate debt. Introduced in its simplest form in a seminal work \cite{m74},
this framework was further extended in various papers, see a survey in
\cite{LiptonSepp2011} and references therein. In contrast to reduced-form
models, structural default models suffer from the curse of dimensionality
when the number of counterparties grows; however, these models provide a
more natural financial description of the default event for a typical firm.

One of the possible extensions of the structural framework, which is of high
importance in the current environment, consists in taking into account the
fact that banks, in addition to their liabilities to the outside economy,
also have some liabilities to each other. This topic is discussed, e.g., in
\cite{BOE2011}, where it is mentioned that systemic capital requirements for
individual banks, determined as the solution to the policymaker's
optimization problem, depend on the structure of banks'\ balance sheets
(including their obligations to other banks) and the extent to which their
asset values tend to move together. More generally, systemic capital
requirements are found to be increasing in banks' balance sheet size
relative to other banks in the system, as well as their interconnectedness,
and, materially, contagious bankruptcy costs.

From this perspective, an extension of the simplest Merton model can be
proposed to quantify default risks in an interconnected banking system. For
instance, \cite{Elsinger2006, Gauthier2010} consider systemic risk in such a
system and attribute it either to correlations between the asset values of
the banks, or to interlinkages of the banks' balance sheets, which could
result in contagious defaults. An extended Merton model can be built as a
combination of the correlated Merton balance sheet models, calibrated by
using observed bank equity returns, and a network of interbank exposures
cleared in the spirit of \cite{EisenbergNoe2001}.

In this paper we develop a model, which builds upon its predecessors; yet,
it differs from the earlier models in one very important respect. Namely,
rather than addressing a point-in-time default event, we consider defaults,
which can occur at any time, by introducing a continuous default barrier in
the spirit of \cite{BC1976}. We feel that this extension is necessary in
order to analyze the effect of mutual liabilities properly, especially
because we wish to provide not just qualitative, but also quantitative
conclusions. To avoid confusion, we emphasize that this effect differs from
that of contagion for correlated defaults in reduced-form models, see,
e.g., \cite{Yu2007, Bielecki2011}.

To achieve our goal, we need to come up with a suitable structural model
capable of handling mutual obligation effects at various time scales. It is
well-known that pure diffusion asset dynamics is manifestly inadequate for
relatively short time-scales, and we need to introduce jumps into the
model, see, e.g., \cite{Zhou2001, Lipton2002}. Therefore, we choose a L\'{e}vy
jump-diffusion driver for the asset dynamics.

Multi-dimensional L\'{e}vy processes find various applications in
mathematical finance. They are used in modeling basket equity derivatives,
various credit derivatives, etc. Unfortunately, tractability of
multi-dimensional L\'{e}vy processes is rather limited. In addition, it is
difficult to study such processes because they suffer from the curse of
dimensionality. Various numerical methods, including analytical,
semi-analytical, finite-difference (FD), Monte-Carlo methods, and their
combinations have been used for solving the corresponding problems, again
see, e.g., a survey in \cite{LiptonSepp2011} and references therein.
Certainly, rather straightforward Monte Carlo method can be proposed to
simulate multi-dimensional L\'{e}vy processes. However, in general it is
both slow or inaccurate. Therefore, finite difference methods seem to be a
viable alternative for 2D and 3D problems, despite the fact that in the 3D
case such methods can be relatively slow (but definitely faster than the
corresponding Monte Carlo method).

The authors are aware of limited number of papers on mathematical finance, which are
using FD methods to solve 2D partial Integro-differential equation (PIDE) describing
the evolution of two fully correlated assets, see, e.g., \cite{CliftForsyth2008}, \cite{LiptonSepp2009a, LiptonSepp2011}. In \cite{CliftForsyth2008}, the authors
use a bivariate distribution proposed in \cite{MarshallOlkin1967} and
consider normal and exponentially distributed multivariate jumps. In \cite{LiptonSepp2009a},
the authors consider assets, which are correlated
twofold. First, diffusion components are correlated in the standard manner
because they are driven by correlated Brownian motions. Second, jump
components are correlated because for each asset they are represented as a
sum of a) systemic exponential jumps common for all assets, and b)
idiosyncratic exponential jumps specific for a particular asset. From a
historical perspective, this idea can be traced back to the work of Vasicek,
who developed a multi-factor structural model assuming that the dynamics of
individual assets can be described as a sum of systemic and idiosyncratic
parts, \cite{vasicek1987,vasicek2002}
\footnote{It should be emphasized that Vasicek model considers a single period setting,
whereas L\'{e}vy models have to be analyzed in continuous time. In addition,
L\'{e}vy models use infinitely divisible distributions, rather than standard Gaussian random variables.}.

However, other L\'{e}vy models could be of interest as well, see, e.g., \cite{EberleinKeller:95}, where it is shown that generalized hyperbolic models fit the market data pretty well. Therefore, an extended framework which allows for general L\'{e}vy models to be used when modeling jumps is highly desirable. Below we provide a short survey of various approaches to introduce multivariate correlated jumps via L\'{e}vy's copula, multivariate subordinators of the Brownian motion, etc., as well as discuss their advantages and pitfalls. Our main concerns with regard to the existing approaches are two-fold: a) some of them are not flexible enough to meet all the modeling requirements, because they impose some undesirable restrictions on the jump correlation structure; b) they suffer from the curse of dimensionality in the sense that their complexity is polynomial rather than linear in each dimension.

Another observation is that even in the 1D case traditional methods for
solving PIDEs experience some problems, see a survey in \cite{Itkin2014},
and references therein. In the multi-dimensional case these problems become
even harder. To deal with these problems, we choose a particular way of
introducing correlated jumps and combine it with the multi-dimensional
version of the matrix exponential method proposed first in \cite{ItkinCarr2012Kinky}
and later further elaborated in \cite{Itkin2014,Itkin2014a}. The presented construction allows different jumps to be used for modeling the idiosyncratic and common factors. For example, in
the 2D case we can represent idiosyncratic jumps of the first bank by using
the Meixner model of \cite{Schoutens01}, idiosyncratic jumps of the second
bank by using the Merton model, and simulate their common jumps by using the
CGMY model. We do not claim that such rich choice of L\'{e}vy processes is
necessary in practice, since the actual jump distribution is hard to
establish with certainty, merely that it is possible to do. In our experience,
hyper-exponential jumps introduced in \cite{Lipton2002} are more than
adequate for all practical purposes. We don't consider every
possible combination of L\'{e}vy processes in this paper, since this could
be done based on the general principles described in \cite{Itkin2014,Itkin2014a}.
However, as an example, we do consider a model with
Gaussian idiosyncratic jumps and exponential systemic jumps. As part of this example, we think
of idiosyncratic jumps as two-sided, while systemic jumps as one-sided. In
this sense, our example should be ideologically similar to that in \cite{CliftForsyth2008}.
However, our method is not restricted by this choice and
differs from that of \cite{CliftForsyth2008} in several important respects:
a) we use Gaussian and exponential jumps just as an example, other common
jumps and univariate marginals could be used as well; b) we use the matrix
exponential method, rather than the traditional method for solving the
corresponding PIDE; c) we present a splitting method to provide solutions
of the 2D and 3D problems with second order of accuracy in both space and
time, and prove convergence of the method. Our method is of the linear
complexity (i.e., $O(N_{1}\times N_{2})$ in the 2D case and $O(N_{1}\times
N_{2}\times N_{3})$ in the 3D case) provided that the Merton, Kou, CGMY or
Meixner L\'{e}vy models are used. Our method is faster than the FFT method
used in \cite{CliftForsyth2008}.

In this paper, we concentrate on our structural default model for two or
three banks with mutual liabilities. The method can also be used to price
basket options. We show that accounting for these liabilities affects both
the joint survival probability of these banks, which is to be expected, as
well as their marginal survival probabilities, which is not the case when
mutual liabilities are ignored. This fact has to be taken into account when
marginals are calibrated to the market CDS spreads. We provide several
numerical examples in order to demonstrate that the presence of mutual
obligations could potentially strongly affect the corresponding survival
probabilities, and, by implication, the stability of the inter-bank system,
especially in the 3D case.

The new results of the paper are as follows: a) interbank mutual
obligations are incorporated in the structural default credit model with
correlated jumps, and their impact on the joint and marginal probabilities
is investigated both qualitatively and quantitatively; b) new splitting
method is proposed to solve the corresponding PIDE with correlated jumps in
the 2D and 3D cases. The method includes new steps that don't appear in the
1D case. For many popular L\'{e}vy models the method provides linear
complexity in each dimension and is unconditionally stable.

The rest of the paper is organized as follows. In section~\ref{stat} we
describe our multi-dimensional structural model, which is an extension of
\cite{LiptonSepp2009a}. In section \ref{secCorJumps} we provide a short
survey of the existing approaches to multivariate correlated jumps, and
describe the one we find to be particularly suitable for our goals. In
section~\ref{Frac} we shortly describe the method of \cite{ItkinCarr2012Kinky, Itkin2014,Itkin2014a} and extend it to the multi-dimensional case. In section~\ref{splitting} we describe the splitting
algorithm, which is adopted for solving the corresponding multi-dimensional
PIDE. In section~\ref{NP} we provide a detailed numerical scheme for solving
the fractional jump equations and prove the unconditional stability, second
order accuracy and convergence of the scheme. We also emphasize that our
scheme preserves positivity of the solution. The results of our numerical
experiments are discussed in sections~\ref{ne2d} (the 2D case) and \ref{ne3d}
(the 3D case). In section~\ref{ne3d}, we describe necessary details of
the numerical scheme used in the 3D case. We draw our conclusions in
section~\ref{conclusion}.

%%%%%%%%%%%%%%%%%%%%%%%%%%%%%%%%%%%%%%%%%%%%%%%%%%%%%%%%%%%%%%%%%%%%%%

\section{Interbank mutual obligations in a structural default model}

\label{stat}
%%%%%%%%%%%%%%%%%%%%%%%%%%%%%%%%%%%%%%%%%%%%%%%%%%%%%%%%%%%%%%%%%%%%%%
Similar to \cite{LiptonSepp2009a,LiptonSepp2011} we consider a
multi-dimensional structural model inspired by the familiar model of \cite{m74},
see \cite{LiptonSepp2009a,LiptonSepp2011} and references therein.

First, for simplicity, assume that we have just two banks with external
assets $A_{i,t}$, $i=1,2$ and liabilities $L_{i,t}=G_{t}L_{i,0}$, and no
mutual liabilities. Here $G_{t}$ is the deterministic growth factor
\begin{equation}
G_{t}=\exp \left( \int_{0}^{t}r_{t^{\prime }}dt^{\prime }\right) ,
\label{Growth}
\end{equation}
\noindent where $r_{t}$ is the forward rate. Also assume that the default barrier
$l_{i,t}$ is a deterministic function of time\footnote{Below expression assumes that the bank assets are allowed to be below its liabilities up to some value determined by the recovery rate. In this case there is no default if such a breach is observed at some time before the maturity $T$. In this setup the default boundary has a kink at $t=T$.}:
\[ l_{i,t}=
\begin{cases}
R_{i}L_{i,t}, & t < T, \\
L_{i,T}, & t = T,
\end{cases}
\]
\noindent where $R_{i}$ is the average recovery of the bank's liabilities,
and $T$ is the debt maturity. Under normal circumstances, $R_{i}$ has a typical value
$R_{i}=0.4$.

We define the $i$th bank's default time $\tau _{i}$ assuming continuous
monitoring as follows
\[ \tau _{i}=\inf_{0<t\leq T}[A_{i,t}\leq l_{i,t}]. \]

Let us extend this approach by assuming that the banks in question do have
mutual liabilities, which we denote by $L_{ij,t}, \ i,j=1,2$; below we assume that $L_{ij,t} = G_t L_{ij,0}$. Thus, the total assets and liabilities of the $i$th bank are $A_{i}+\sum_{j\ne i}L_{ji}$ and $L_{i}+\sum_{j \ne i}L_{ij}$, respectively. Accordingly, the default time of the
first bank has the form
\begin{equation}
\tau _{1}=\inf_{0<t\leq T}[A_{1,t}\leq \lambda _{1,t}],  \label{def1}
\end{equation}
\noindent where
\[ \lambda _{1,t}=
\begin{cases}
R_{1}\left( L_{1,t}+L_{12,t}\right) -L_{21,t}, & t < T, \\
L_{1,T}+L_{12,T}-L_{21,T}, & t = T.
\end{cases}
\]
The default time of the second bank has a similar form
\begin{equation}
\tau _{2}=\inf_{0<t\leq T}[A_{2,t}\leq \lambda _{2,t}].  \label{def2}
\end{equation}

A new situation occurs, however, in case of default of one of the
banks. In case when the second bank defaults, it pays back to its creditors
only a portion of its liabilities, namely $R_{2}(L_{2}+L_{21})$. However,
the first bank pays back to the successors of the second bank the full
amount $L_{12}$, assuming of course that it does not default simultaneously
with the second bank. Thus, at time $\tau _{2}$ the first bank receives
from the second bank the amount $R_{2}L_{21}$ and pays the amount $L_{12}$. Therefore,
the new asset value $\tilde{A}_{1}$ of the first bank
becomes $\tilde{A}_{1}=A_{1}+R_{2}L_{21,\tau _{2}}-L_{12,\tau _{2}}$, while
its liability value becomes $L_{1,\tau _{2}}$. We assume that the actual
external assets do not jump in value, while the outside liabilities do get
adjusted. If the amount $R_{2}L_{21,\tau _{2}}-L_{12,\tau _{2}}$ is
positive, i.e., the first bank gets extra cash, which it spends retiring some of
the external liabilities. If this amount is negative, then it is borrowed
from the external sources. In both cases the total external liabilities become
\[
\tilde{L}_{1,\tau_2}=L_{1,\tau_2}-R_{2}L_{21,\tau_2}+L_{12,\tau_2}.
\]
Accordingly, the new default barrier for the first bank could be defined as
\[ \tilde{\lambda}_{1,t} =+
\begin{cases}
\lambda_{1,t}, & t < \tau_2, \\
\bar{\lambda}_{1,t}, & t \ge \tau_2,
\end{cases}
\qquad
\bar{\lambda}_{1,t} =
\begin{cases}
R_{1}\left( L_{1,\tau_2}-R_{2}L_{21,\tau_2}+L_{12,\tau_2}\right)G_t/G_{\tau_2}, & t < T, \\
(L_{1,\tau_2}-R_{2}L_{21,\tau_2}+L_{12,\tau_2})G_T/G_{\tau_2}, & t = T,
\end{cases}
\]
\noindent so that its default time has the form
\begin{equation}
\tilde{\tau}_{1}=\inf_{0<t\leq T}[A_{1,t}\leq \tilde{\lambda}_{1,t}].
\label{def1after}.
\end{equation}
It is easy to see, that after the default of the second bank, the default
boundary of the first bank increases by the amount of
\begin{equation}
\Delta \lambda _{1,\tau_2}=\tilde{\lambda}_{1,\tau_2}-\lambda _{1,\tau_2}=(1-R_{1}R_{2})L_{21,\tau_2}>0.
\label{boundChange1}
\end{equation}
Similarly,
\begin{equation}
\Delta \lambda _{2,\tau_1}=(1-R_{1}R_{2})L_{12,\tau_1}>0.  \label{boundChange2}
\end{equation}
Thus, the default boundary of the first bank jumps up by the increment $\Delta \lambda _{1}$
at time $\tau _{2}$, and the default boundary of the second
bank jumps up by the increment $\Delta \lambda _{2}$ at time $\tau _{1}$.
Mathematically, this means that our problem now has floating boundaries that
are deterministic functions of time which could increase at some moment by jumping to a higher value.
\begin{figure}[th]
\begin{center}
\fbox{
\begin{tikzpicture}
\def\sizeG{8.};
\def\lA{2.5};
\def\lB{3.};
\def\lAB{2.};
\def\lBA{1.5};
\draw (0,0) -- (\sizeG,0)
      (0,0) -- (0,\sizeG);
\draw[red, ultra thick] (0,\lB + \lBA) -- (\lA,\lB + \lBA) -- (\lA,\lB) -- (\sizeG,\lB);
\draw[blue, ultra thick] (\lA,\sizeG) -- (\lA,\lB) -- (\lB + \lAB,\lB) -- (\lB + \lAB,0);
\draw[red, ultra thick] (\lA,\lB) -- (\sizeG,\lB);
\draw[red, dashed] (0,\lB) -- (\lA,\lB);
\draw[blue, dashed] (\lA,\lB) -- (\lA,0);
\node at (\sizeG,-0.3){$A_1$};
\node at (-0.3,\sizeG){$A_2$};
\node at (-0.2,-0.2){$0$};
\node at (0.5*\lA,-0.3){$\lambda_1$};
\node at (-0.3,0.5*\lB){$\lambda_2$};
\node at (0.85*(\lA+\lAB,-0.3){$\Delta \lambda_1$};
\node at (-0.45,0.85*(\lB+\lBA){$\Delta \lambda_2$};
\node at (-0.3,\lB){$9$};
\node at (\lA + 0.3,\lB+0.3) {$3$};
\node at (\lA + \lAB + 0.3,\lB+0.3) {$7$};
\node at (\sizeG + 0.2,\lB+0.3) {$4$};
\node at (-0.3,\lB + \lBA) {$1$};
\node at (\lA + 0.3,\lB + \lBA) {$2$};
\node at (\lA + 0.3,0.3) {$6$};
\node at (\lA + \lAB + 0.3,0.3) {$8$};
\node at (\lA + \lAB + 0.3,0.3) {$8$};
\node at (\lA + 0.3,\sizeG) {$5$};
\end{tikzpicture}
}
\end{center}
\caption{Default boundaries of two banks with and without mutual liabilities.
}
\label{Fig1}
\end{figure}
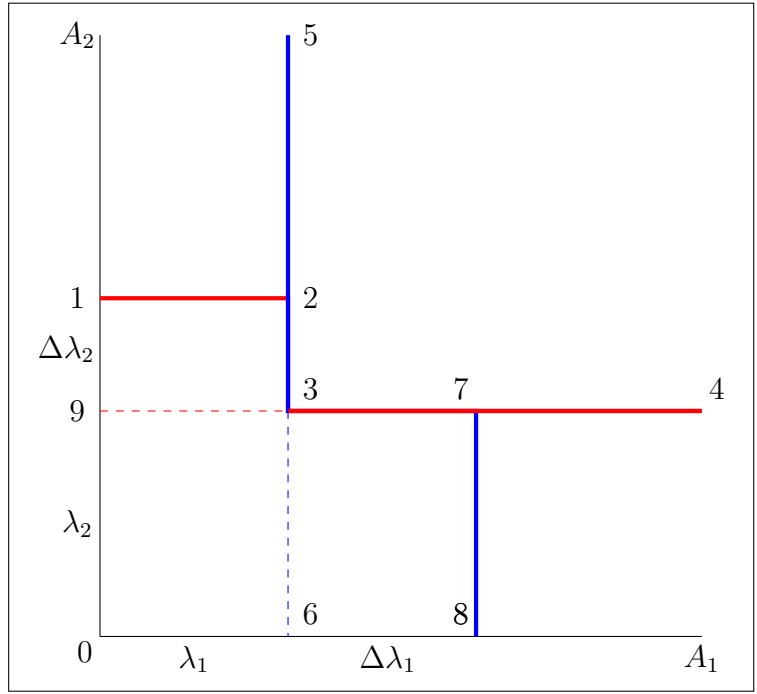

To illustrate the above observation, let us consider Fig.~\ref{Fig1} where
the situation is depicted at some moment of time $t < T$. If we don't take into
account mutual liabilities $L_{12}$ and $L_{21}$, then the default
boundaries are: for the first bank - a vertical line along the path
"5-2-3-6"; for the second bank - a horizontal line along the pass "9-3-7-4".
In the presence of mutual liabilities, the default boundary for the first
bank becomes "5-2-3-7-8", while for the second bank it has the form
"1-2-3-7-4".

A similar consideration can be used to show that the calculation of the marginal survival probabilities (which are needed to calibrate the model to the market CDS spreads) is strongly impacted by mutual liabilities. To emphasize this point, again consider the domain in Fig~\ref{Fig1}. Suppose we need to know $q_1(A_{1},A_{2},t,T)$ which is the marginal survival probability of the first bank conditional on the asset value $A_{2}$ of the second bank. In the presence of interbank liabilities we observe a new situation since the dynamics the first bank depends on the possible default of the second bank via the boundary conditions. Hence, the problem of computing $q_1$ remains inherently two-dimensional in contrast to the situation with no interbank liabilities.

In what follows we provide some numerical results that demonstrate this
behavior in the case of two and three firms by solving the corresponding 2D
and 3D PIDEs describing the evolution of both joint and survival probabilities in time and
space. We also discuss how parameters of the model affect the magnitude of
the effect.

To proceed further, we need to specify the dynamics of the external risky assets $A_{1,t}, A_{2,t}$; we assume that it could include both
diffusion and jumps components. We also assume that these assets are correlated as follows:

\begin{enumerate}
\item Diffusion components are correlated with the correlation coefficient $\rho$.

\item Jumps are correlated with the correlation
coefficient $\rho_{1,2}$ (see below for a more precise definition of this
correlation coefficient).

\item Changes in the firm value due to jumps and diffusion are uncorrelated.
\end{enumerate}

We assume that the underlying asset prices $A_{i,t}$ are driven by exponential
L{\'{e}}vy processes
\begin{equation}
A_{i,t}=A_{i,0}\exp (\Lambda _{i,t}),\quad 0\leq t\leq T,  \label{Levy}
\end{equation}
Under an appropriate pricing measure, each $\Lambda _{i,t}$ is characterized by a \LY triplet
$\left( \gamma_i, \sigma_i, \nu_i \right)$ with the drift $\gamma_i $, volatility $\sigma_i$, and L{\'{e}}vy measure $\nu_i$,
\begin{equation}
\Lambda _{i,t}=\gamma_{i}+\sigma _{i}W_{i,t}+Y_{i,t},\qquad \gamma_i ,\sigma_i \in
\mathbb{R},\quad \sigma_i >0,  \label{Lt}
\end{equation}
\noindent where $W_{t}$ is a standard Brownian motion on $0\leq t\leq T$ and
$Y_{t}$ is a pure jump process.\footnote{In order to better fit the market data, we can replace $\sigma_{i}$ with
the local volatility function $\sigma _{i}(t,A_{i,t})$.} We consider this
process under the pricing measure, therefore, $A_{i,t}/G_{t}$ is a
martingale. This allows us to express $\gamma $ as (\cite{Eberlein2009}) (further on we omit sub-index $i$ for simplicity)
\[
\gamma =r-\frac{\sigma ^{2}}{2}-\int_{\mathbb{R}}\left( e^{x}-1-x\mathbf{1}
_{|x|<1}\right) \nu (dx),
\]
with
\[
\int_{|x|>1}e^{x}\nu (dx)<\infty .
\]
At this stage, the jump measure $\nu (dx)$ is left unspecified, because we are open to
consider all types of jumps including those with finite and infinite
variation, and finite and infinite activity.

Let us introduce the logarithmic variables $x_{i}=\log A_{i}$ and define the
joint survival probability as follows
\[
Q(x_{1},x_{2},t,T)=\mathbf{1}_{\tau _{1}>t,\tau _{2}>t}E_{t}^{Q}[\mathbf{1}
_{\tau _{1}>T,\tau _{2}>T}].
\]
\noindent The joint survival probability solves the following PIDE, see \cite{LiptonSepp2011} and also
\cite{CliftForsyth2008}
\begin{equation}
Q_{\tau }=[\mathcal{L}+\mathcal{J}]Q  \label{PIDE},
\end{equation}
\noindent where $\tau =T-t$ is the backward time, and $\mathcal{L}$ is the two-dimensional linear convection-diffusion
operator of the form
\begin{equation}  \label{Dif}
\mathcal{L} = \sum_{i=1}^2 \left(r - \dfrac{1}{2}\sigma^2_i\right) \fp{}{x_i}
+ \dfrac{1}{2} \sum_{i,j=1}^2 \rho_{i,j} \sigma^2_i \sigma^2_i \fp{}{x_i}
\fp{}{x_j},
\end{equation}
\noindent and $\mathcal{J}$ is the jump operator
\begin{align}  \label{Jump}
\mathcal{J} Q = \int_{-\infty}^{\infty}\Big[Q(x_1&+y_1, x_2 + y_2, \tau)
- Q(x_1, x_2, \tau) - (e^{y_1} - 1)\fp{Q(x_1, x_2, \tau)}{x_1} \\
&- (e^{y_2} - 1)\fp{Q(x_1, x_2, \tau)}{x_2}\Big] \nu(d y_1 dy_2),
\nonumber
\end{align}
\noindent where $\nu(d y_1 d y_2)$ is the two-dimensional L{\'e}vy measure.

This PIDE has to be solved subject to the boundary and terminal conditions.
The terminal condition reads 
\[
Q(x_{1},x_{2},0)=\mathbf{1}_{x_{1}>\log \tilde{\lambda}_{1}(0),\
x_{2}>\log \tilde{\lambda}_{2}(0)}.
\]
The boundary conditions could be set as the Dirichlet conditions at $\pm
\infty $. Obviously,
\[
Q(x_{1},x_{2},\tau)\rightarrow 0,\ \mbox{at}\ x_{i}\rightarrow -\infty .
\]
As $x_{i}\rightarrow \infty ,\ i=1,2$, $Q(x_{1},x_{2},\tau)$ should
replicate the marginal survival probability $Q(x_{3-i},\tau)$. This
condition, however, must be supplemented with the boundary condition when
both $x_{1}\rightarrow \infty $ and $x_{2}\rightarrow \infty $. A natural
choice is $Q(x_{1}\rightarrow \infty ,x_{2}\rightarrow \infty ,\tau)=1$.

Various choices of the L\'evy measures that could be used for this model as
well as an approach to introduce the correlated jumps are discussed in the
next section.

% % % % % % % % % % % % % % % % % % % % % % % % % % % % % %

\section{Correlated jumps and structured default models}

\label{secCorJumps}
% % % % % % % % % % % % % % % % % % % % % % % % % % % % % %
There exist at least three known ways of introducing correlated jumps,
see \cite{ContTankov,Deelstra2010} and references therein.

The first one is to explicitly specify a multivariate distribution of the
jump process. This could be achieved, for instance, as in a celebrated Marshall-Olkin
paper (\cite{MarshallOlkin1967}) who use a multivariate exponential
distribution as a model for failure times, with the possibility of
simultaneous defaults. See also \cite{Linetsky2011} for the discussion of
this approach. The other possibility could be to use L\'evy copula, which in application
to the structural credit models was used by \cite{Baxter2007, Moosbrucker2006}. However, copula-based models impose some restrictive
constraints on the jump parameters to preserve marginal distributions, which
make it difficult to model arbitrary (positive and negative) correlations
between jumps. In other words, due to restrictions on the parameters
controlling marginal distributions, the correlation coefficient doesn't cover the entire range $[-1,1]$. The same problem is inherent in \cite{MarshallOlkin1967} construction as well, since this model
doesn't allow negative correlations between jumps, see, e.g., \cite{CliftForsyth2008}.

Another numerical approach to this problem has been established in \cite{Schwab2013}. The authors
develop Galerkin methods based on a wavelet-compression using the tensor
structure of the multi-dimensional PIDE operator to cope with the
complexity stemming from jumps as well as with the curse of dimensionality. The multivariate L\'evy processes in their framework
include jump diffusions and further allow for pure jump processes. The
correlation of the processes is constructed based on L\'evy copulas, see
also \cite{Schwab2004, Winter2009}. Accordingly, it is a subject of same
restrictions on the model parameters.

Another construction in \cite{LiptonSepp2009a} is also partly inspired by
the work of \cite{MarshallOlkin1967} with a significant advantage that both
positively and negatively correlated jumps can be represented.

The second approach uses multivariate subordinated Brownian
motions (or multivariate subordinators of Brownian motions), where the
L\'evy subordinator could consist of both common as well as idiosyncratic parts.
It is advocated by \cite{LucianoSemerato2010, Guillaume2013,
Linetsky2011}, see also survey in \cite{Ballotta2014} and references
therein. As applied to our problem it provides analytical
tractability if the local volatility is ignored. In this case the
characteristic function of the entire jump-diffusion model is known in
closed form, and transform methods, like FFT or cosine transform could be
used. With allowance for the local volatility this approach becomes
inefficient, because the jump integral must be computed at every point in
time and space.

In addition, this approach can only accommodate strictly
positive correlation values due to restrictions on the parameters controlling
the correlation coefficients. They are required to ensure the existence of
the characteristic function of the processes involved, see \cite{Ballotta2014}.

Therefore, we introduce the correlated jumps following the third approach \cite{Ballotta2014},
which constructs the jump process as a linear combination of two independent L\'evy processes representing the systematic factor and the idiosyncratic shock, respectively. Note, that
such an approach was also previously mentioned in \cite{ContTankov}. It has
an intuitive economic interpretation and retains nice tractability, as the
multivariate characteristic function in this model is available in closed
form.

The main result of \cite{Ballotta2014} that immediately follows from
Theorem~4.1 of \cite{ContTankov} (see also \cite{Garcia2009, Deelstra2010})
is given by:
\begin{proposition}
\label{prop0} Let $Z_t, \ Y_{j,t}, \ j=1,...,n$ be independent L\'evy
processes on a probability space $(Q, F, P )$, with characteristic functions
$\phi_Z(u; t)$ and $\phi_{Y_j}(u; t)$, for $j=1,...,n$ respectively. Then,
for $b_j \in \mathbb{R}, \ j=1,...,n$
\[
X_t = (X_{1,t},..., X_{n,t})^\top = (Y_{1,t} + b_1 Z_t,...,Y_{n,t} + b_n
Z_t)^\top
\]
is a L\'evy process on $\mathbb{R}^n$. The resulting characteristic function
is
\[
\phi_{\mathbf{X}}(\mathbf{u}; t) = \phi_Z\left( \sum_{i=1}^n b_i u_i;
t\right) \prod_{i=1}^n \phi_{Y_j}(u_j;t), \qquad \mathbf{u} \in \mathbb{R}^n.
\]
\end{proposition}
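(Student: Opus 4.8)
The plan is to verify the two claims in Proposition~\ref{prop0} separately: first that $X_t$ is a L\'evy process on $\mathbb{R}^n$, and second that its characteristic function has the stated product form. Both follow from elementary properties of independent L\'evy processes, so the proof is short; the only mild subtlety is bookkeeping the independence structure correctly.

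\textbf{Step 1: $X_t$ is a L\'evy process.} First I would observe that for each fixed $t$, the map $(Y_{1,t},\dots,Y_{n,t},Z_t)\mapsto (Y_{1,t}+b_1 Z_t,\dots,Y_{n,t}+b_n Z_t)$ is linear, so $X_t$ is a measurable functional of the underlying processes. To check the L\'evy property I would verify the defining conditions one at a time. (i) $X_0 = 0$ since $Y_{j,0}=Z_0=0$ for all $j$. (ii) Independent increments: for $s<t$, write $X_t - X_s = (Y_{1,t}-Y_{1,s}+b_1(Z_t-Z_s),\dots)$; each component increment is built from the increments $Y_{j,t}-Y_{j,s}$ and $Z_t-Z_s$ over $(s,t]$, which are independent of the $\sigma$-algebra generated by all the processes up to time $s$ (because the $n+1$ processes are mutually independent and each has independent increments), hence $X_t-X_s$ is independent of $\sigma(X_u: u\le s)$. (iii) Stationary increments: the joint law of $(Y_{1,t}-Y_{1,s},\dots,Y_{n,t}-Y_{n,s},Z_t-Z_s)$ depends only on $t-s$ by stationarity of each factor and independence, so the same holds for the linear image $X_t-X_s$. (iv) Stochastic continuity: $X_t\to X_s$ in probability as $t\to s$ because each $Y_{j,t}\to Y_{j,s}$ and $Z_t\to Z_s$ in probability and finite linear combinations preserve convergence in probability. (One may also note c\`adl\`ag paths follow from c\`adl\`ag paths of the factors.) This establishes that $X_t$ is a L\'evy process on $\mathbb{R}^n$; this is exactly the content invoked from Theorem~4.1 of \cite{ContTankov}.

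\textbf{Step 2: the characteristic function.} For $\mathbf{u}=(u_1,\dots,u_n)^\top\in\mathbb{R}^n$, compute
\[
\phi_{\mathbf{X}}(\mathbf{u};t) = E\Big[\exp\Big(i\sum_{j=1}^n u_j X_{j,t}\Big)\Big]
= E\Big[\exp\Big(i\sum_{j=1}^n u_j (Y_{j,t}+b_j Z_t)\Big)\Big].
\]
Regrouping the exponent gives $\sum_{j=1}^n u_j Y_{j,t} + \big(\sum_{j=1}^n b_j u_j\big) Z_t$, so
\[
\phi_{\mathbf{X}}(\mathbf{u};t) = E\Big[\exp\Big(i\sum_{j=1}^n u_j Y_{j,t}\Big)\exp\Big(i\Big(\sum_{j=1}^n b_j u_j\Big) Z_t\Big)\Big].
\]
Since $Z$ and $Y_1,\dots,Y_n$ are mutually independent, the expectation factorizes, and using independence of the $Y_j$ among themselves once more,
\[
\phi_{\mathbf{X}}(\mathbf{u};t) = E\Big[\exp\Big(i\Big(\sum_{j=1}^n b_j u_j\Big) Z_t\Big)\Big]\prod_{j=1}^n E\big[\exp(i u_j Y_{j,t})\big]
= \phi_Z\Big(\sum_{j=1}^n b_j u_j; t\Big)\prod_{j=1}^n \phi_{Y_j}(u_j;t),
\]
which is the claimed formula.

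\textbf{Main obstacle.} There is no real obstacle here — the result is a direct consequence of mutual independence plus the L\'evy axioms; the only thing to be careful about is keeping the two uses of independence distinct (the global factorization separating $Z$ from the block $(Y_1,\dots,Y_n)$, and the internal factorization of that block), and making sure the increment-independence argument in Step~1 uses the joint filtration of \emph{all} $n+1$ processes rather than that of $X$ alone. If one wanted, the L\'evy--Khintchine exponent of $X_t$ could be read off from the formula as $\psi_{\mathbf{X}}(\mathbf{u}) = \psi_Z(\sum_j b_j u_j) + \sum_j \psi_{Y_j}(u_j)$, but this is not needed for the statement as given.
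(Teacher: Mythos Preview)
Your proof is correct and complete; the argument via the L\'evy axioms for Step~1 and the straightforward factorization of the characteristic function in Step~2 is exactly the right way to do this. The paper itself does not prove Proposition~\ref{prop0} at all --- it merely quotes the statement from \cite{Ballotta2014} and points to Theorem~4.1 of \cite{ContTankov} --- so your write-up supplies precisely the elementary details that those references contain.
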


By construction every factor $X_{i,t}, i=1,...,n$ includes a common factor $Z_t$. Therefore, all components $X_{i,t}, i=1,...,n$ could jump together,
and loading factors $b_i$ determine the magnitude (intensity) of the jump in
$X_{i,t}$ due to the jump in $Z_t$. Thus, all components of the multivariate
L\'evy process $\mathbf{X}_t$ are dependent, and their pairwise correlation is given by (again
see \cite{Ballotta2014} and references therein)
\[
\rho_{j,i} = \mbox{Corr}(X_{j,t}, X_{i,t}) = \dfrac{b_j b_i \mbox{Var}(Z_1)} {\sqrt{\mbox{Var}(X_{j,1})}\sqrt{\mbox{Var}(X_{j,1})}}
\]
Such a construction has multiple advantages, namely:

\begin{enumerate}
\item As $\mbox{sign}(\rho_{i,j}) = \mbox{sign}(b_i b_j)$, both positive and
negative correlations can be accommodated

\item In the limiting case $b_i \to 0$ or $b_j \to 0$ or $\mbox{Var}(Z_1) =
0 $ the margins become independent, and $\rho_{i,j} = 0$. The other limit $b_i \to \infty$ or $b_j \to \infty$ represents a full positive correlation
case, so $\rho_{i,j} = 1$. Accordingly, $b_i \to \infty, \ b_{3-i} \to
\infty, \ i=1,2$ represents a full negative correlation case as in this
limit $\rho_{i,j} = -1$.
\end{enumerate}

One more advantage of this approach becomes apparent if we want the margin
distribution $X_{i,t}$ to be fixed. Then a set of conditions on convolution
coefficients could be imposed to preserve the margin. This is reasonable
from the practical viewpoint as the entire credit product could be
illiquid, and, therefore, the market quotes necessary to calibrate the full
correlation matrix might not be available. Hence, as an alternative, the marginal
distributions could be first calibrated to a more liquid market of the
components $X_{i,t}$, and the entire correlation structure should preserve
these marginals. As a first step, this defines parameters of the idiosyncratic
factors. As the next step, the remaining parameters of the entire correlation structure
are, based on a separate consideration. Note, that a similar
idea is used in another recent paper \cite{Mai2014}, where the authors
concentrate on two specific models for the marginals, and achieve
tractability by choosing the relevant parameters in such a way that univariate
marginals are separated from dependence structure. However, in the present
approach, any model could be treated in a unified way.

According to this setup, the instantaneous correlation between the log-assets $x_1$ and $x_2$ reads
\begin{equation}  \label{corr}
\rho_{12} = \dfrac{\rho \sigma_1 \sigma_2 + b_1 b_2 \mbox{Var}(Z_1)}{\sqrt{\sigma_1^2 + \mbox{Var}(X_{1,1})} \sqrt{\sigma_2^2 + \mbox{Var}(X_{2,1})}}
\end{equation}

As far as the structural default model is concerned, positive jumps might
not be necessary. However, below we keep them for generality,
as the proposed approach to modeling correlated jumps is applicable without any modification
in other settings, where both positive and negative jumps are important.

% % % % % % % % % % % % % % % % % % % % % % % % %

\section{Fractional PDE and jump integrals}

\label{Frac} % % % % % % % % % % % % % % % % % % % % % % % % %
Assuming that some particular L\'evy models are chosen to construct
processes $Y_{i,t}, i=1,...,n$ and $Z_t$, let us look more closely at
\eqref{Jump}. In doing that we follow the method proposed in \cite{ItkinCarr2012Kinky} (first presented at Global Derivatives and Risk
Conference, Roma 2009) and then further elaborated on in \cite{Itkin2014,
Itkin2014a}. The key idea is to represent the jump integral in the form of a
pseudo-differential operator and then formally solve, thus obtained
evolutionary partial pseudo-differential equation via a matrix exponential.

To be clear, we start with a one-dimensional case. It is
well known from quantum mechanics \cite{OMQM} that a translation (shift)
operator in $L_2$ space could be represented as
\begin{equation}  \label{transform}
\mathcal{T}_b = \exp \left( b \dfrac{\partial}{\partial x} \right),
\end{equation}
with $b$ = const, so
\[
\mathcal{T}_b f(x) = f(x+b).
\]

Therefore, the one-dimensional integral corresponding to \eqref{Jump} can be
formally rewritten as
\begin{align}  \label{intGen}
\int_\mathbb{R} \left[ Q(x+y,t) \right. & \left. - Q(x,t) - (e^y-1)
\fp{Q(x,t)}{x} \right] \nu(dy) = \mathcal{J} Q(x,t), \\
\mathcal{J} & \equiv \int_\mathbb{R}\left[ \exp \left( y \dfrac{\partial}{
\partial x} \right) - 1 - (e^y-1) \fp{}{x} \right] \nu(dy).  \nonumber
\end{align}

In the definition of the operator $\mathcal{J}$ (which is actually an
infinitesimal generator of the jump process), the integral can be formally
computed under some mild assumptions about existence and convergence if one
treats the term $\partial/ \partial x$ as a constant. Therefore, the operator $\mathcal{J}$ can be considered as some generalized function of the
differential operator $\partial_x$. We can also treat $\mathcal{J}$ as a
pseudo-differential operator.

It is important to emphasize that
\begin{equation}  \label{MGF}
\mathcal{J} = \psi(-i \partial_x) - [\log \psi(-i)]\partial_x = \mbox{MGF}
(\partial_x) - [\log \mbox{MGF}(1)]\partial_x,
\end{equation}
\noindent where $\psi(u)$ is the characteristic exponent of the jump
process, and $\mbox{MGF(u)}$ is the moment generation function corresponding
to this characteristic exponent. This directly follows from the L{\'e}vy-Khinchine theorem. Note, that the last term on the right hand side of \eqref{MGF} is a compensator as the characteristic exponent is computed
using the expectation under a risk-neutral measure $\mathbb{Q}$. In other
words, the last term is added to make the forward price to be a true
martingale under this measure.

This representation is advantageous because it transforms a linear non-local Integro-differential operator (jump operator)
into a linear local pseudo-differential (fractional) operator. The operator $\mathcal{J}$ can be analytically computed for various popular L\'evy models,
hence $\mathcal{J}$ admits an explicit representation in the form of the
pseudo-differential operator. Accordingly, a pure jump evolutionary equation
\[
Q_{\tau} = \mathcal{J} Q
\]
could be formally integrated (under some mild existence conditions) to provide
\[
Q(x,\tau + \Delta \tau) = e^{\Delta \tau\mathcal{J}} Q(z,\tau).
\]
The operator $\mathcal{A} = e^{\Delta \tau\mathcal{J}}$ is the matrix
exponential and is understood as a Taylor series expansion of $\Delta \tau \mathcal{J}$.

In \cite{ItkinCarr2012Kinky, Itkin2014, Itkin2014a} it is shown that the
matrix exponential can be efficiently computed on a finite difference grid
for various jump models, namely Merton, Kou, CGMY, NIG, General Hyperbolic
and Meixner models. Efficiency of this method in general is not worse than
that of the FFT, and in many cases is linear in $N$ - the number of the grid
points\footnote{In particular, this is the case for the Merton, Kou, CGMY and Meixner models.
In this paper we also prove it for the exponential L\'evy model which is a
particular case of the Kou double-exponential model, see Appendix~\ref{ap4}.}. The proposed method is almost universal, i.e., it allows solving PIDEs for
various jump-diffusion models in a unified form. Second order finite
difference schemes in both space and time are constructed in such a way that i) they are
unconditionally stable, and ii) they preserve positivity of the solution.
Therefore, we assume this method to be robust and more efficient than
constructions proposed in the literature to solve a similar class of
problems, e.g., Galerkin methods of \cite{Schwab2013} which even for sparse
matrices don't reach the linear complexity in each dimension. In addition,
the construction of the correlated jumps using the L\'evy copulas used in
\cite{Schwab2013} is restrictive as this was already discussed in Section~\ref{secCorJumps}.

Now let us use the same idea for getting fractional representation of the
jump integral in the two-dimensional case. The translational two-dimensional
operator in $L_2 \times L_2$ space could be similarly represented as
\begin{equation}  \label{transform2}
\mathcal{T}_{y_1,y_2} = \exp \left( y_1 \dfrac{\partial}{\partial x_1}
\right)\exp \left( y_2 \dfrac{\partial}{\partial x_2} \right),
\end{equation}
with $y_1, y_2$ = const, so
\[
Q(x_1 + y_1, x_2 + y_2, \tau) = \mathcal{T}_{y_1,y_2} Q(x_1,x_2,\tau)
\]
Therefore, the whole integral in \eqref{Jump} could be re-written in the
form
\begin{equation}  \label{Jump2d}
\mathcal{J} = \int_{-\infty}^{\infty}\Big[ e^{y_1 \partial_{x_1}} e^{y_2
\partial_{x_2}} - 1 - (e^{y_1} - 1)\partial_{x_1} - (e^{y_2} -
1)\partial_{x_2}\Big] \nu(d y_1 d y_2).
\end{equation}
By using Proposition~\ref{prop0} and the L\'evy-Khinchine
theorem, similar to how the \eqref{MGF} was derived, we can show
that
\begin{equation}  \label{JumpPhi}
\mathcal{J} = \sum_{j=1}^2 \psi_{X_j}(-i \partial_{x_j}) +
\psi_Z \left(-i \sum_{j=1}^2 b_j \partial_{x_j} \right) + 1 - \sum_{j=1}^2
[\log \psi_{X_j}(-i)]\partial_{x_j},
\end{equation}
Based on \cite{ItkinCarr2012Kinky, Itkin2014, Itkin2014a} we know how to
deal with all the terms in this expression except the new term $\psi_Z$
which represents a two-dimensional characteristic exponent of the common
jump process $Z_t$. We shall discuss this in the next sections.

% % % % % % % % % % % % % % % % % % % % % % % % % % % % % % % % % % %

\section{Splitting on financial processes}

\label{splitting}
% % % % % % % % % % % % % % % % % % % % % % % % % % % % % % % % % % %
To solve \eqref{PIDE} we use an FD approach with splitting in
financial processes. We refer the reader to \cite{Itkin2014} to the detailed
description of the splitting algorithm. Splitting (a.k.a. the method of
fractional steps) reduces the solution of the original k-dimensional
unsteady problem to the solution of $k$ one-dimensional equations per time
step. For example, consider a two-dimensional diffusion equation with a
solution obtained by using some FD method. At every time step, a standard discretization in space variables is applied, such that the FD grid contains $N_1$ nodes in the first dimension and $N_2$ nodes in the second dimension. Then the problem reduces to solving a system of $N_1 \times N_2$ linear equations with a block-diagonal matrix. In contrast, utilization of splitting results in, e.g., $N_1$ systems of $N_2$ linear equations, where the matrix of each system is banded (tridiagonal). The latter approach is easy to implement and, more importantly, provides significantly better performance.

A natural choice for the first step would be to split operators
$\mathcal{L}$ and $\mathcal{J}$ in \eqref{PIDE} separately due to their different
mathematical nature. So a special scheme could be applied at every step of
the splitting procedure. As operators $\mathcal{L}$ and $\mathcal{J}$ are
non-commuting, we use Strang's splitting scheme, \cite{Strang1968}, which
provides second order approximation in time $\tau$ assuming that at every
step of splitting the corresponding equations are solved also with the
second order accuracy in time. For more details on how to apply Strang's
splitting to fractional equations see \cite{Itkin2014} and references
therein. The entire numerical scheme reads
\begin{align}  \label{splitFin}
Q^{(1)}(x_1,x_2,\tau) &= e^{\frac{\Delta \tau}{2} \mathcal{D} }
Q(x_1,x_2,\tau), \\
Q^{(2)}(x_1,x_2,\tau) &= e^{\Delta \tau \mathcal{J}} Q^{(1)}(x_1,x_2,\tau) ,
\nonumber \\
Q(x_1,x_2,\tau + \Delta \tau) &= e^{\frac{\Delta \tau}{2} \mathcal{D} }
Q^{(2)}(x_1,x_2,\tau).  \nonumber
\end{align}

Thus, instead of an non-stationary PIDE, we obtain one PIDE with no drift and no re-wri
diffusion (the second equation in \eqref{splitFin}) and two non-stationary PDEs
(the first and third ones in \eqref{splitFin})\footnote{As we use splitting on financial processes, pure jump models are naturally
covered by the same method. In the latter case there is no diffusion at the
first and third step of the method, so one has to solve a pure convection
equation. This could be achieved by applying various methods known in the
fluid mechanics literature, see, e.g., \cite{Roach1976}.}.

Proceeding in a similar way, the second step is to apply splitting to the
second equation in \eqref{splitFin}. We represent \eqref{JumpPhi} in the form
\begin{equation}  \label{split2}
\mathcal{J} = \mathcal{J}_1 + \mathcal{J}_2 + \mathcal{J}_{12},
\end{equation}
\noindent where
\begin{align*}
\mathcal{J}_j &= \psi_{X_j}(-i \partial_{x_j}) - [\log \psi_{X_j}(-i)]\partial_{x_j}, \qquad j=1,2 \\
\mathcal{J}_{12} &= 1 + \psi_Z \left(-i \sum_{j=1}^2 b_j \partial_{x_j} \right).
\end{align*}
Obviously, operators $\mathcal{J}_1$ and $\mathcal{J}_2$ commute, so that
\[
e^{t(\mathcal{J}_1 + \mathcal{J}_2)} = e^{t \mathcal{J}_1}e^{t\mathcal{J}_2}
\]
Therefore, replacing the second step in \eqref{splitFin} with another
Strang's splitting using \eqref{split2}, we finally obtain
\begin{align}  \label{splitAll}
Q^{(1)}(x_1,x_2,\tau) &= e^{\frac{\Delta \tau}{2} \mathcal{D} }
Q(x_1,x_2,\tau), \\
Q^{(2)}(x_1,x_2,\tau) &= e^{\frac{\Delta \tau}{2} \mathcal{J}_1}
Q^{(1)}(x_1,x_2,\tau) ,  \nonumber \\
Q^{(3)}(x_1,x_2,\tau) &= e^{\frac{\Delta \tau}{2} \mathcal{J}_2}
Q^{(2)}(x_1,x_2,\tau) ,  \nonumber \\
Q^{(4)}(x_1,x_2,\tau) &= e^{\Delta \tau \mathcal{J}_{12}}
Q^{(3)}(x_1,x_2,\tau) ,  \nonumber \\
Q^{(5)}(x_1,x_2,\tau) &= e^{\frac{\Delta \tau}{2} \mathcal{J}_2}
Q^{(1)}(x_1,x_2,\tau) ,  \nonumber \\
Q^{(6)}(x_1,x_2,\tau) &= e^{\frac{\Delta \tau}{2} \mathcal{J}_1}
Q^{(1)}(x_1,x_2,\tau) ,  \nonumber \\
Q(x_1,x_2,\tau + \Delta \tau) &= e^{\frac{\Delta \tau}{2} \mathcal{D} }
Q^{(6)}(x_1,x_2,\tau).  \nonumber
\end{align}

\section{Numerical procedure}

\label{NP} Due to the splitting nature of our entire algorithm represented
by \eqref{splitAll}, each step of splitting is computed using a separate
numerical scheme. All schemes provide second order approximation in both
space and time, are unconditionally stable and preserve positivity of the
solution.

For the first and the last step where a pure convection-diffusion
two-dimensional problem has to be solved we use a Hundsdorfer-Verwer scheme,
see \cite{HoutWelfert2007, HoutFoulon2010, Itkin2014b}. A non-uniform
finite-difference grid is constructed similar to \cite{ItkinCarrBarrierR3}.

For the steps 2,3,5,6 we choose the Merton jump model. In other words, the
idiosyncratic jump part of each component $X_{j,t}, \ j=1,2$ is represented
as Gaussian. Computation of the matrix exponential $\mathcal{A}_j
Q(x_1,x_2,\tau) = e^{\frac{\Delta \tau}{2} \mathcal{J}_j} Q(x_1,x_2,\tau), \
j=1,2$ could be done with complexity $O(N_1 N_2)$ at every time step. This
is because when computing $\mathcal{A}_1$ the second variable $x_2$ is a
dummy variable, while computation of $\mathcal{A}_1 Q(x_1,x_2 = const,\tau)$
is $O(N_1)$, see \cite{Itkin2014}. Construction of the jump grid, which is a
superset of the finite-difference grid used at the first (diffusion) step is
also described in detail in \cite{Itkin2014}.

For step 4 (common or systemic jumps) we choose the Kou double exponential jumps
model proposed in \cite{Kou2004}. Its \LY density is
\begin{equation}  \label{Kou}
\nu (dx) = \varphi\left[ p \theta_1 e^{-\theta_1 x} \mathbf{1}_{x \ge 0} +
(1-p) \theta_2 e^{\theta_2 x} \mathbf{1}_{x < 0} \right] dx,
\end{equation}
where $\varphi$ is the jumps intensity, $\theta_1 > 1$, $\theta_2 > 0$, $1 > p > 0$; the first condition was
imposed to ensure that the underlying asset price has a finite expectation.

Using this model a one-dimensional representation for $\mathcal{J}$ is given
in \cite{Itkin2014}. Similarly, in a two dimensional case we obtain
\begin{align}  \label{KouJ}
\mathcal{J}_{12} &= \varphi \left[p \theta_1(\theta_1- b_1
\triangledown_{x_1} - b_2 \triangledown_{x_2})^{-1} + (1-p)
\theta_2(\theta_2 + b_1 \triangledown_{x_1} + b_2 \triangledown_{x_2})^{-1}
\right], \\
\triangledown_{x_1} &\equiv \partial_{x_1}, \quad \triangledown_{x_2} \equiv
\partial_{x_2}, \quad -\theta_2 < Re(b_1 \triangledown_{x_1} + b_2
\triangledown_{x_2}) < \theta_1.  \nonumber
\end{align}
The inequality $-\theta_2 < Re(\triangledown) < \theta_1$ is an existence
condition for the integral defining $\mathcal{J}$ and should be treated as
follows: the discretization of the operator $\triangledown$ should be such
that all eigenvalues of matrix $A$, a discrete analog of $\triangledown$,
obey this condition.

We proceed in a way similar to the one-dimensional case. To this end we can
use the (1,1) P{\'a}de approximation of $e^{\Delta \tau \mathcal{J}_{12}}$
which provides $O( (\Delta \tau)^2)$ approximation of the form
\begin{equation}  \label{pade}
e^{\Delta \tau \mathcal{J}_{12}} \approx [1 - \frac{1}{2}\Delta \tau
\mathcal{J}_{12}]^{-1}[1 + \frac{1}{2}\Delta \tau \mathcal{J}_{12}] +
O(\Delta \tau^3).
\end{equation}
This scheme can also be re-written as
\begin{equation}  \label{Picard}
Q(x_1, x_2, \tau + \Delta \tau) - Q(x_1, x_2, \tau) = \frac{1}{2} \Delta
\tau \mathcal{J}_{12} \left[Q(x_1,x_2, \tau + \Delta \tau) + Q(x_1, x_2,\tau)
\right],
\end{equation}
\noindent and this equation could be solved using the Picard fixed-point
iterations. In doing so, we observe that the entire product $\mathcal{J}_{12}
Q(x_1,x_2,\tau)$ with $\mathcal{J}_{12} $ given in \eqref{KouJ} can be
calculated as follows.

\subsubsection*{First term.}

Observe that the vector $z(x_1,x_2,\tau) = p \theta_1 (\theta_1- b_1
\triangledown_{x_1} - b_2 \triangledown_{x_2})^{-1} Q(x_1, x_2,\tau)$ solves
the equation
\begin{equation}  \label{kou1system}
(\theta_1- b_1 \triangledown_{x_1} - b_2 \triangledown_{x_2})
z(x_1,x_2,\tau) = p \theta_1 Q(x_1,x_2,\tau).
\end{equation}
This is a two-dimensional linear PDE of the first order. It could be solved
numerically with the second order approximation in $x_1,x_2$ using the
Peaceman-Rachford ADI method, see \cite{McDonough2008}
\begin{align}  \label{adi1}
\left[\left(s + \dfrac{1}{2} \theta_1 \right)- b_1 \triangledown_{x_1}\right]
z^*(x_1,x_2,\tau) &= \left[\left(s - \dfrac{1}{2} \theta_1 \right)+ b_2
\triangledown_{x_2}\right] z^k(x_1,x_2,\tau) + b \\
\left[\left(s + \dfrac{1}{2} \theta_1 \right)- b_2 \triangledown_{x_2}\right]
z^{k+1}(x_1,x_2,\tau) &= \left[\left(s - \dfrac{1}{2} \theta_1 \right)+ b_1
\triangledown_{x_1}\right]z^*(x_1,x_2,\tau) + b  \nonumber \\
b &\equiv p \theta_1 Q(x_1,x_2,\tau).  \nonumber
\end{align}
Here $s > 0$ is some parameter that could be chosen in a special way to
provide convergence of the method, see Appendices. The number $k$ is
the iteration number, the whole process starts with $k=1$.

Before constructing a finite difference scheme to solve this equation we
need to introduce some definitions. Define a one-sided \textit{forward}
discretization of $\triangledown$, which we denote as $A^F: A^F C(x) =
[C(x+h,t) - C(x,t)]/h$. Also define a one-sided \textit{backward}
discretization of $\triangledown$, denoted as $A^B: \ A^B C(x) = [C(x,t) -
C(x-h,t)]/h$. These discretizations provide first order approximation in $h$, e.g., $\triangledown C(x) = A^F C(x) + O(h)$. To provide the second order
approximations, use the following definitions. Define $A^C_2 = A^F\dot A^B$
- the \textit{central} difference approximation of the second derivative $\triangledown^2$, and $A^C = (A^F + A^B)/2$ - the \textit{central}
difference approximation of the first derivative $\triangledown$. Also
define a one-sided second order approximations to the first derivatives:
\textit{backward} approximation $A^B_2: \ A^B_2 C(x) = [ 3 C(x) - 4 C(x-h) +
C(x-2h)]/(2 h)$, and \textit{forward} approximation $A^F_2: \ A^F_2 C(x) = [
-3 C(x) + 4 C(x+h) - C(x-2h)]/(2 h)$. Also $I$ denotes a unit matrix. All
these definitions assume that we work on a uniform grid, however this could
be easily generalized for the non-uniform grid as well, see, e.g., \cite{HoutFoulon2010}.

The following Proposition now solves the problem \eqref{adi1}

\begin{proposition}
\label{PropKou1} Consider the following discrete approximation of the ADI
scheme \eqref{adi1}:
\begin{align}  \label{adi1d}
\Big[\Big(s &+ \dfrac{1}{2} \theta_1 \Big)I_{x_1} - b_1 A(x_1)\Big]
z^*(x_1,x_2,\tau) = \Big[\Big(s - \dfrac{1}{2} \theta_1 \Big)I_{x_2} + b_2
A(x_2)\Big]z^k(x_1,x_2,\tau) + b \\
\Big[\Big(s &+ \dfrac{1}{2} \theta_1 \Big)I_{x_2} - b_2 A(x_2)\Big]
z^{k+1}(x_1,x_2,\tau) = \Big[\Big(s - \dfrac{1}{2} \theta_1 \Big)I_{x_1} +
b_1 A(x_1)\Big]z^*(x_1,x_2,\tau) + b  \nonumber \\
b &\equiv p \theta_1 Q(x_1,x_2,\tau), \qquad A(x_i) =
\begin{cases}
A_2^F(x_i), & b_i > 0 \\ A_2^B(x_i), & b_i < 0, \qquad i=1,2
\end{cases}
\nonumber
\end{align}

Then this scheme is unconditionally stable, approximates the original PDE \eqref{adi1} with accuracy $O( (\Delta x_1)^2 + (\Delta x_2)^2 + (\Delta x_1)(\Delta
x_2))$ and preserves positivity of the solution.
\end{proposition}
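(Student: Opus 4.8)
The plan is to establish the three claimed properties---second-order consistency, unconditional stability, and positivity preservation---one at a time, treating the two half-sweeps of the Peaceman--Rachford ADI iteration \eqref{adi1d} essentially symmetrically. For \emph{consistency}, I would Taylor-expand each one-sided second-order stencil: by construction $A_2^F(x_i) = \triangledown_{x_i} + O((\Delta x_i)^2)$ and $A_2^B(x_i) = \triangledown_{x_i} + O((\Delta x_i)^2)$, so each bracketed operator in \eqref{adi1d} matches its continuous counterpart in \eqref{adi1} to second order. Eliminating the intermediate value $z^*$ between the two sweeps (as is standard for Peaceman--Rachford) reproduces the PDE \eqref{kou1system} with a truncation error that is $O((\Delta x_1)^2 + (\Delta x_2)^2)$ from the spatial stencils plus a mixed cross term $O((\Delta x_1)(\Delta x_2))$ coming from the operator-splitting of the two directions; collecting these gives exactly the stated order. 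The choice of forward versus backward differencing keyed to $\mathrm{sign}(b_i)$ does not affect consistency but is essential for the other two properties.

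For \emph{unconditional stability}, I would perform a von Neumann (Fourier) analysis. On a uniform grid write $z^k_{m,n} = g^k e^{\mathrm{i}(\xi_1 m + \xi_2 n)}$; the symbols of $A_2^F$ and $A_2^B$ are the standard second-order upwind symbols, and the key structural fact is that, with the sign convention $A(x_i) = A_2^F$ when $b_i>0$ and $A_2^B$ when $b_i<0$, the quantity $-b_i A(x_i)$ has a symbol with nonnegative real part for every $\xi_i$ (this is precisely the upwinding that makes the first-order convection term dissipative). Consequently each factor $(s + \tfrac12\theta_1)I - b_i A(x_i)$ has symbol with real part at least $s + \tfrac12\theta_1 > 0$, while each factor $(s - \tfrac12\theta_1)I + b_j A(x_j)$ has symbol with real part at most $s - \tfrac12\theta_1$ in absolute value relative to the former (here one uses $\theta_1 > 1 > 0$ and $s>0$ chosen as in the Appendix so that $s - \tfrac12\theta_1$ is dominated by $s + \tfrac12\theta_1$). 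The amplification factor of one full iteration is then a product of two factors each of modulus $\le 1$, giving $|g| \le 1$ unconditionally; that the map is a \emph{contraction} (modulus strictly $<1$) for suitable $s$ is what drives convergence of the Picard/ADI iteration in $k$, and I would state that separately, deferring the sharp choice of $s$ to the Appendix.

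For \emph{positivity}, I would argue at the level of the matrices rather than the symbols. The left-hand operator of each sweep, $M_i := (s + \tfrac12\theta_1)I - b_i A(x_i)$, is (with the sign-matched one-sided stencil) an M-matrix: it has positive diagonal, nonpositive off-diagonals, and is strictly diagonally dominant because the $O(1/\Delta x_i)$ contributions of the upwind stencil cancel on the diagonal against the off-diagonals while the $s + \tfrac12\theta_1$ term supplies the surplus; hence $M_i^{-1} \ge 0$ entrywise. The right-hand operator $N_j := (s - \tfrac12\theta_1)I + b_j A(x_j)$ is entrywise nonnegative provided its diagonal entry $s - \tfrac12\theta_1 - |b_j|\cdot(\text{diagonal coefficient of the upwind stencil})$ stays nonnegative---a mild mesh-ratio-free condition on $s$, again pinned down in the Appendix. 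Since the source term $b = p\theta_1 Q \ge 0$ when $Q\ge 0$, and since $Q^{(3)} \ge 0$ is propagated from the positivity-preserving earlier splitting steps, each sweep sends a nonnegative vector to a nonnegative vector, so the Picard iteration for \eqref{Picard} preserves positivity. I expect the \textbf{main obstacle} to be the bookkeeping in the positivity argument: showing the $M$-matrix / nonnegativity structure survives the three-point, one-sided second-order stencils $A_2^{F}, A_2^{B}$ (whose middle weight has the "wrong" sign relative to a pure two-point upwind scheme) requires checking that the extra $C(x-2h)$ term does not destroy diagonal dominance, which forces a lower bound on $s$ that must be reconciled with the upper bound needed for $N_j \ge 0$; verifying that the admissible window for $s$ is nonempty is the crux, and it is exactly this that the deferred Appendix computation supplies.
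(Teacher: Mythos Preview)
Your consistency argument is fine and matches the paper's one-line observation that $A_2^B,\,A_2^F$ are second-order approximations to $\triangledown_x$. Your stability route via von Neumann analysis is different from the paper's; the paper instead computes the eigenvalues of $M_1^{-1}M_2$ directly (they are all equal to the diagonal ratio $\bigl(s-\tfrac12\theta_1+3b_2/h\bigr)\big/\bigl(s+\tfrac12\theta_1-3b_1/h\bigr)<1$), which is shorter and simultaneously feeds into the positivity argument. Either approach can be made to work here, so this is a difference of taste rather than a gap.

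The real gap is in the positivity step. You assert that $M_i=(s+\tfrac12\theta_1)I-b_iA(x_i)$ is an M-matrix with ``nonpositive off-diagonals,'' but that is false for the second-order one-sided stencils: $A_2^B$ has weights $[\,1,\,-4,\,3\,]/(2h)$, so the resulting matrix has off-diagonal entries of \emph{both} signs, and no amount of diagonal dominance will make it an M-matrix in the classical sense. You correctly flag the ``wrong-sign'' middle weight as the crux, but the proposed fix---checking diagonal dominance---does not yield $M_i^{-1}\ge 0$ when off-diagonals have mixed signs. The paper handles this by working instead with \emph{EM-matrices} (in the sense of Elhashash--Szyld): one shows that $M_1$ can be written as $sI-B$ with $B$ \emph{eventually} nonnegative and $\rho(B)<s$, which still forces $M_1^{-1}\ge 0$ via the Perron--Frobenius-type lemmas in \cite{Itkin2014a}. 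Likewise $M_2$ is only shown to be \emph{eventually} nonnegative, under the single lower bound $s>\tfrac12\theta_1-3b_2/h$; there is no competing upper bound on $s$ to reconcile, so the ``nonempty window'' worry in your last paragraph does not arise. Replacing your M-matrix/diagonal-dominance argument with the EM-matrix framework (or citing the relevant lemmas) closes the gap.
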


\begin{proof}
See Appendix~\ref{ap1}.
\end{proof}
We can start iterations in \eqref{adi1d} by choosing $z^{(1)}(x_1, x_2,
\tau) = Q(x_1, x_2, \tau)$. In our experiments the scheme converges to the
solution after 5-6 iterations if we choose $s = \theta_1 + 1$ in \eqref{adi1} and $s = \theta_2 + 1$ in \eqref{adi2}.

\subsubsection*{Second term.}

Observe that the vector $z(x_1,x_2,\tau) = (1-p) \theta_2 (\theta_2 + b_1
\triangledown_{x_1} + b_2 \triangledown_{x_2})^{-1} Q(x_1, x_2,\tau)$ solves
the equation
\begin{equation}  \label{kou2system}
(\theta_2 + b_1 \triangledown_{x_1} + b_2 \triangledown_{x_2})
z(x_1,x_2,\tau) = (1-p) \theta_2 Q(x_1,x_2,\tau).
\end{equation}
This is also a two-dimensional linear PDE of the first order, so again we
apply the Peaceman-Rachford method
\begin{align}  \label{adi2}
\left[\left(s + \dfrac{1}{2} \theta_2 \right)+ b_1 \triangledown_{x_1}\right]
z^*(x_1,x_2,\tau) &= \left[\left(s - \dfrac{1}{2} \theta_2 \right)- b_2
\triangledown_{x_2}\right](1-p) z^k(x_1,x_2,\tau) + b \\
\left[\left(s + \dfrac{1}{2} \theta_2 \right)+ b_2 \triangledown_{x_2}\right]
z^{k+1}(x_1,x_2,\tau) &= \left[\left(s - \dfrac{1}{2} \theta_2 \right)- b_1
\triangledown_{x_1}\right]z^*(x_1,x_2,\tau) + b  \nonumber \\
b &\equiv (1-p)\theta_2 Q(x_1,x_2,\tau)  \nonumber
\end{align}

The next Proposition provides a construction of the finite difference scheme to
solve the problem \eqref{adi2}

\begin{proposition}
\label{PropKou2} Consider the following discrete approximation of the ADI
scheme \eqref{adi2}:
\begin{align}  \label{adi2d}
\Big[\Big(s &+ \dfrac{1}{2} \theta_2 \Big)I_{x_1} + b_1 A(x_1)\Big]
z^*(x_1,x_2,\tau) = \Big[\Big(s - \dfrac{1}{2} \theta_2 \Big)I_{x_2} - b_2
A(x_2)\Big]z^k(x_1,x_2,\tau) + b \\
\Big[\Big(s &+ \dfrac{1}{2} \theta_2 \Big)I_{x_2} + b_2 A(x_2)\Big]
z^{k+1}(x_1,x_2,\tau) = \Big[\Big(s - \dfrac{1}{2} \theta_2 \Big)I_{x_1} -
b_1 A(x_1)\Big]z^*(x_1,x_2,\tau) + b  \nonumber \\
b &\equiv (1-p) \theta_2 Q(x_1,x_2,\tau), \qquad A(x_i) = \begin{cases}
A_2^B(x_i), & b_i > 0 \\ A_2^F(x_i), & b_i < 0, \qquad i=1,2\end{cases}
\nonumber
\end{align}
Then this scheme is unconditionally stable, approximates the original PDE
\eqref{adi2} with $O( (\Delta x_1)^2 + (\Delta x_2)^2 + (\Delta x_1)(\Delta
x_2))$ and preserves positivity of the solution.
\end{proposition}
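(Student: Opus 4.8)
The plan is to follow the proof of Proposition~\ref{PropKou1} essentially verbatim, exploiting the structural symmetry between \eqref{adi1d} and \eqref{adi2d}. The key observation is that \eqref{adi2d} is obtained from \eqref{adi1d} by the formal substitutions $\theta_1 \mapsto \theta_2$, $b_i \mapsto -b_i$, together with an adjustment of the source term from $p\theta_1 Q$ to $(1-p)\theta_2 Q$. Under $b_i \mapsto -b_i$ the upwinding rule flips: where \eqref{adi1d} uses $A_2^F(x_i)$ for $b_i>0$, scheme \eqref{adi2d} must use $A_2^B(x_i)$ for $b_i>0$ (and vice versa), which is exactly the case selection written in the statement. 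So all that is really needed is to check that the three conclusions — unconditional stability, $O((\Delta x_1)^2 + (\Delta x_2)^2 + (\Delta x_1)(\Delta x_2))$ accuracy, and positivity preservation — are invariant under this substitution.

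First I would address the approximation order. The consistency argument is local and identical to the one in Appendix~\ref{ap1}: each operator $A_2^F$, $A_2^B$ approximates $\triangledown_{x_i}$ to second order on a uniform grid, and the Peaceman--Rachford splitting of the two-dimensional first-order PDE \eqref{kou2system} introduces the familiar cross term $O((\Delta x_1)(\Delta x_2))$ from the non-commuting half-steps (and an $O((\Delta\tau)^2)$-type splitting error, here absorbed since this is an iterative solve of a stationary equation, not a time march). Since replacing $\theta_1$ by $\theta_2$ and $b_i$ by $-b_i$ does not change the leading truncation terms, the order carries over unchanged.

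Next I would handle stability and positivity together, as these are the places where the sign of the coefficients matters and where the upwinding choice must be verified to be the correct one. Writing $A_{x_1} = b_1 A(x_1)$ and $A_{x_2} = b_2 A(x_2)$, scheme \eqref{adi2d} has the form $[(s+\tfrac12\theta_2)I + A_{x_1}] z^* = [(s-\tfrac12\theta_2)I - A_{x_2}] z^k + b$, etc. The point of choosing $A(x_i) = A_2^B(x_i)$ when $b_i>0$ (so that $A_{x_i}$ is $b_i$ times a backward second-order derivative, rather than forward) is precisely that the left-hand-side matrices $(s+\tfrac12\theta_2)I + A_{x_i}$ become strictly diagonally dominant M-matrices with positive diagonal: a backward one-sided derivative with a positive multiplier contributes a positive amount to the diagonal, making the inverse nonnegative, while the right-hand-side matrices $(s-\tfrac12\theta_2)I - A_{x_i}$ have nonnegative entries once $s$ is chosen large enough (e.g. $s = \theta_2 + 1$ as the text suggests, which also keeps $s - \tfrac12\theta_2 > 0$). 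This yields both positivity of $z^{k+1}$ given $z^k \ge 0$ and $b \ge 0$, and, via the M-matrix bounds on the iteration operator, unconditional stability of the Peaceman--Rachford sweep. I would also note that the eigenvalue-location constraint $-\theta_2 < \mathrm{Re}(b_1\triangledown_{x_1} + b_2\triangledown_{x_2}) < \theta_1$ from \eqref{KouJ} is respected by this discretization, since the one-sided upwind operators keep the spectrum of the discrete $\triangledown$ in the correct half-plane.

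The main obstacle I anticipate is purely bookkeeping: getting every sign right simultaneously in the two ADI half-steps after the $b_i \mapsto -b_i$ flip, and confirming that the \emph{same} choice of upwinding ($A_2^B$ for $b_i>0$) makes \emph{both} the $x_1$-sweep matrix and the $x_2$-sweep matrix into M-matrices, rather than fixing one sweep while spoiling the other. Since the structure is completely parallel to Proposition~\ref{PropKou1} and there is no new analytical ingredient, I would simply write ``See Appendix~\ref{ap2}'' and defer the sign-chase there, mirroring the treatment of Proposition~\ref{PropKou1}.

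\begin{proof}
See Appendix~\ref{ap2}.
\end{proof}
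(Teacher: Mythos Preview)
Your proposal is correct and matches the paper's approach exactly: the paper's Appendix~\ref{ap2} consists of the single sentence ``The proof is completely analogous to that in Appendix~\ref{ap1},'' and your substitution $\theta_1\mapsto\theta_2$, $b_i\mapsto -b_i$ with the attendant flip of the upwinding rule is precisely the analogy being invoked. If anything, your discussion of why the M-matrix/EM-matrix structure survives the sign flip is more explicit than what the paper itself provides.
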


\begin{proof}
See Appendix~\ref{ap2}.
\end{proof}
Overall, our experiments show that the first Picard scheme \eqref{Picard}
converges after 2-3 iterations to the absolute accuracy of $2\cdot 10^{-4}$.

To summarize, the total complexity of the proposed splitting algorithm at
every time step is $O(\alpha N_1 N_2)$, where $\alpha$ is some constant
coefficient. To estimate it, observe that the solution of the
convection-diffusion equation requires five sweeps, where at every sweep
either $N_1$ systems of linear equations with the tridiagonal matrix of size
$N_2$, or $N_2$ systems of size $N_1$ have to be solved, see \cite{HoutFoulon2010}. The idiosyncratic jump parts modeled by the Merton jump
model are solved with the complexity $O(N_1 N_2)$ (i.e., at this step $\alpha = 1$) using the Improved Fast Gauss Transform (IFGT), see \cite{Itkin2014}. As we need to provide two steps of splitting in the $x_1$
dimension, and two other steps in the $x_2$, the total number of sweeps is
four. Finally the above algorithm for computing common jumps using the Kou
model requires 2-3 Picard iterations for the matrix exponential, and at
every iteration we solve 2 ADI systems of linear equations using 5-6
iterations, so in total about 30 sweeps. Thus, overall $\alpha$ is about 44.
This is still better than a straightforward application of the FFT which
usually requires the number of FFT nodes to be a power of 2 with a typical
value of 2$^{11}$. It is also better than the traditional approach which
considers approximation of the linear non-local jump integral $\mathcal{J}$
on some grid and then makes use of the FFT to compute a matrix-by-vector
product. Indeed, when using FFT for this purpose we need two sweeps per
dimension using a slightly extended grid (with, say, the tension coefficient
$\xi$) to avoid wrap-around effects, \cite{Halluin2005b}. Therefore the
total complexity per time step could be at least $O( 4 \xi_1 \xi_2 N_1 N_2
\log_2 (\xi_1 N_1) \log_2 (\xi_2 N_2))$ which even for a relatively small
FFT grid with $N_1 = N_2 = 512$, and $\xi_1 = \xi_2 = 1.1$ is about 9 times
slower than our method. Also traditional approach experiences some other
problems for jumps with infinite activity and infinite variation, see survey
in \cite{Itkin2014} and references therein.

If instead of the Kou model one wants to apply the Merton jump model for
systemic jumps, it becomes a bit more computationally expensive. Indeed, at
every time step the multi-dimensional diffusion equation with constant
coefficients could be effectively solved by using the IFGT. Suppose, in doing
so, we want to achieve the accuracy $10^{-3}$. Then, roughly, we need to keep
$p=9$ terms in the Taylor series expansion of IFGT, and the total complexity
for the two-dimensional case $d=2$ is $O(90 N_1 N_2)$, see \cite{IFGT}.

% % % % % % % % % % % % % % % % % % % % % % %
\section{Numerical experiments} \label{ne2d}
% % % % % % % % % % % % % % % % % % % % % % %

\subsection{The one-dimensional problem}
We start with the one-dimensional model for two reasons. First, the solution of this model is used as the boundary condition for the two-dimensional problem. Second, in some cases, e.g., for the exponential jumps, this model could be solved in the closed form, and, therefore, can be utilized for verification of the method.

In the first test we consider the one-dimensional pure diffusion problem. We solve it as a limiting  case of the two-dimensional problem when the volatility and drift of the second asset vanish. This solution for the survival probability is compared with the analytical solution which in this case coincides with the price of a digital down-and-out call option, see \cite{Howison1995}. Thus, in this test the robustness of our convection-diffusion FD scheme is validated. Parameters of the model used in this test are given in Table~\ref{Tab1dD}, and the results are presented in Fig.~\ref{dif1dD} where the absolute value of the relative difference between the analytical price and one computed by our finite-difference method is depicted as a function of $A_{1,0}$. As is shown in this Figure the relative error is below 1\% everywhere except in the close vicinity of the barrier where the value of $Q$ itself is small.
\begin{table}[!ht]
\begin{center}
\begin{tabular}{|c|c|c|c|c|c|c|c|c|c|c|c|c|}
\hline
$A_{2,0}$ & $L_{1,0}$ & $L_{2,0}$ & $L_{12,0}$ & $L_{21,0}$ & $R_{1}$ & $R_{2}$ & $r$ &  T & $\sigma_1$ & $\sigma_2$ \cr
\hline
100 & 40 & 0 & 0 & 0 & 1 & 0 & 0.05 & 1 & 0.2 & 0 \cr
\hline
\end{tabular}
\caption{Parameters of the structural 1D default model.}
\label{Tab1dD}
\end{center}
\end{table}

\begin{figure}[!ht]
\begin{center}
\fbox{\includegraphics[width=4 in]{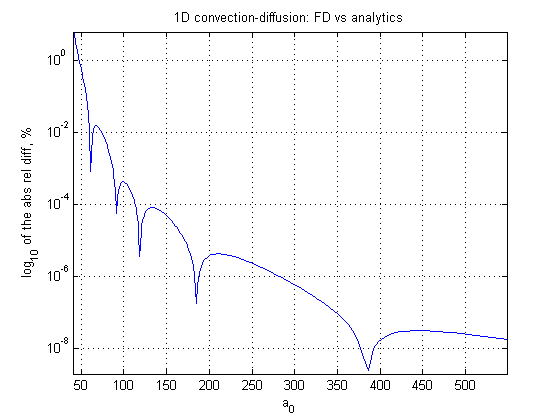}}
\caption{The absolute value of the relative difference between the analytical and the FD solutions
for the convection-diffusion problem as a function of the initial asset value $A_{1,0}$.}
\label{dif1dD}
\end{center}
\end{figure}

In the second test we extend the previous case by adding exponential jumps to the first component. Again, this problem admits an analytical solution which could be expressed via the inverse Laplace transform, see \cite{Lipton2002a} where this problem was solved by using fluctuation identities. It can also be solved by using a generalized transform of \cite{Lewis:2000} combined with  the Wiener-Hopf method, see, e.g., \cite{Kuznetsov2011}. The corresponding solution reads
\[ Q = 1 - \mathcal{L}_q^{-1}\left\{\dfrac{e^{\beta \hat{\rho}_1(q)}}{q}\right\}. \]
Here $\hat{\rho}_1$ is the only negative root of the characteristic equation in the Wiener-Hopf method, $\beta = \log(B_a/a_0) < 0$, and $B_a$ is the default boundary.

Also within the framework of \cite{Itkin2014a} which we use in this paper, exponential jumps were never considered. Therefore, in Appendix~\ref{ap4} for completeness, we construct a finite-difference algorithm for exponential jumps.

In Fig.~\ref{dif1dJ} the absolute value of the relative difference between the analytical and numerical solutions is depicted as a function of $A_{1,0}$. In this experiment we set the intensity of the jumps $\lambda = 0.7$, and the parameter of the exponential distribution $\phi = 2$.
\begin{figure}[!ht]
\begin{center}
\fbox{\includegraphics[width=4 in]{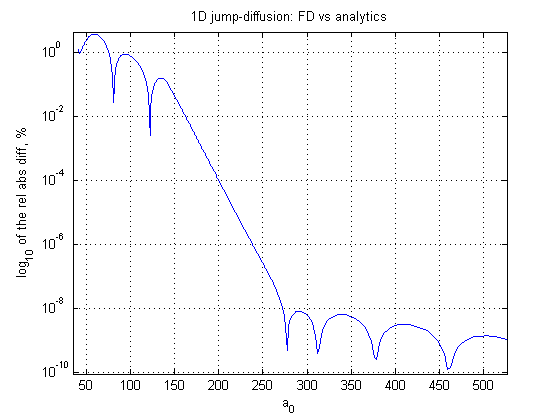}}
\caption{The absolute value of the relative difference between the analytical and the FD solutions
for the jump-diffusion problem as a function of the initial asset value $A_{1,0}$.}
\label{dif1dJ}
\end{center}
\end{figure}
The difference is less than 1\% except close to the barrier; see also Table~\ref{TabComp}.
\begin{table}[!ht]
\begin{center}
\begin{tabular}{|c|c|c|c|c|c|c|c|c|c|c|c|c|c|}
\hline
$A_{1,0}$ & $Q_{an}$ & $Q_{FD}$ & $\Delta Q$ \cr
\hline
40.85  &  0.008805  &   0.008909  &   -0.000103 \cr
41.69  &  0.017710  &   0.017874  &   -0.000164 \cr
42.53  &  0.026710  &   0.026972  &   -0.000262 \cr
43.36  &  0.035802  &   0.036188  &   -0.000387 \cr
44.18  &  0.044982  &   0.045523  &   -0.000541 \cr
44.99  &  0.054251  &   0.054981  &   -0.000729 \cr
45.79  &  0.063610  &   0.064563  &   -0.000953 \cr
46.59  &  0.073058  &   0.074273  &   -0.001215 \cr
47.38  &  0.082601  &   0.084116  &   -0.001515 \cr
48.16  &  0.092241  &   0.094094  &   -0.001853 \cr
48.94  &  0.101984  &   0.104212  &   -0.002228 \cr
49.70  &  0.111833  &   0.114472  &   -0.002639 \cr
50.46  &  0.121795  &   0.124878  &   -0.003083 \cr
51.22  &  0.131876  &   0.135431  &   -0.003556 \cr
51.96  &  0.142080  &   0.146134  &   -0.004054 \cr
52.70  &  0.152413  &   0.156985  &   -0.004571 \cr
53.43  &  0.162881  &   0.167985  &   -0.005104 \cr
54.16  &  0.173486  &   0.179132  &   -0.005646 \cr
54.88  &  0.184233  &   0.190423  &   -0.006190 \cr
55.60  &  0.195123  &   0.201854  &   -0.006732 \cr
\hline
\end{tabular}
\caption{Results for the 1D jump-diffusion test: $Q_{an}, Q_{FD}$ - the analytical and numerical survival probabilities, $\Delta Q = Q_{anal} - Q_{FD}$.}
\label{TabComp}
\end{center}
\end{table}

\subsection{The two-dimensional problem}
In the first test we solve \eqref{PIDE} with parameters of the model given in Table~\ref{Tab1}. \begin{table}[!ht]
\begin{center}
\begin{tabular}{|c|c|c|c|c|c|c|c|c|c|c|c|c|c|}
\hline
$A_{1,0}$ & $A_{2,0}$ & $L_{1,0}$ & $L_{2,0}$ & $L_{12,0}$ & $L_{21,0}$ & $R_{1}$ & $R_{2}$ & $r$ &  T & $\sigma_1$ & $\sigma_2$ & $\rho$ \cr
\hline
110 & 100 & 80 & 85 & 10 & 15 & 0.4 & 0.35 & 0.05 & 1 & 0.2 & 0.3 & 0.5 \cr
\hline
\end{tabular}
\caption{Parameters of the structural default model.}
\label{Tab1}
\end{center}
\end{table}

For idiosyncratic jumps we chose the Merton model with parameters ($\varphi^i, \mu^i_M, \sigma^i_M), i=1,2$, and for systemic jumps we chose the Kou model with parameters $\varphi_{12}, p, \theta_1, \theta_2$, as shown in Table~\ref{Tab2}. We use the upper script $^{(i)}$ to mark the $i$th bank. Also in these experiments without loss of generality we use $\varphi_1 = \varphi_2 = \varphi_{12} \equiv \varphi$.

We computed all tests using a $100 \times 100$ spatial grid for the convection-diffusion problem. Also we use a constant step in time $\Delta \tau = 0.01$, so that the total number of time steps for a given maturity is also 100. The non-uniform grid for jumps in each direction is a superset of the convection-diffusion grid up to $A_i = 10^{5}$. It is built using a geometric progression and contains 80 nodes.
\begin{table}[!ht]
\begin{center}
\begin{tabular}{|c|c|c|c|c|c|c|c|c|c|c|c|}
\hline
$\varphi$ & $\mu_M^{(1)}$ & $\mu_M^{(2)}$ & $\sigma_M^{(1)}$ & $\sigma_M^{(2)}$ & $p$ & $\theta_1$ & $\theta_2$ & $b_1$ & $b_2$ \cr
\hline
3 & 0.5 & 0.3 & 0.3 & 0.4 & 0.3445 & 3.0465 & 3.0775 & 0.2 & 0.3 \cr
\hline
\end{tabular}
\caption{Parameters of the jump models.}
\label{Tab2}
\end{center}
\end{table}

In Fig.~\ref{Test1} the joint survival probability $Q(x_1,x_2,t,T)$ as computed in our experiment is presented at $t=0$.
\begin{figure}[!ht]
\begin{center}
\fbox{\includegraphics[width=4 in]{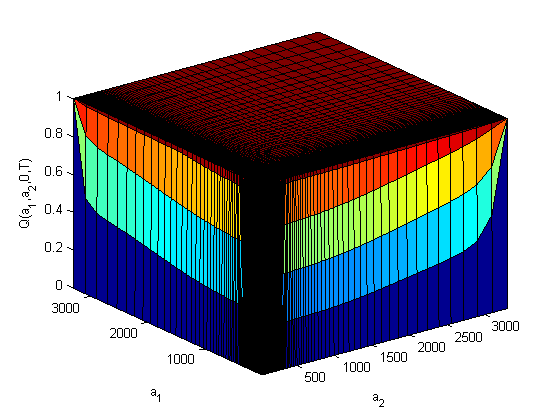}}
\caption{The joint survival probability $Q(x_1,x_2,0,T)$.}
\label{Test1}
\end{center}
\end{figure}

To better see the behavior of the graph close to the initial values of $A_1, A_2$ we zoom-in this picture in the vicinity of these values, see Fig.~\ref{Test1Big}.
\begin{figure}[!ht]
\begin{center}
\fbox{\includegraphics[width=4 in]{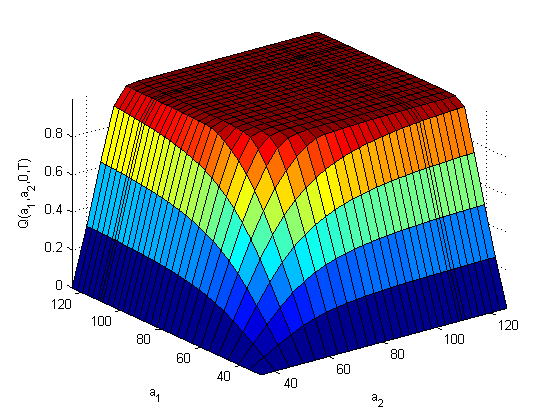}}
\caption{The joint survival probability $Q(x_1,x_2,0,T)$, a zoomed-in picture.}
\label{Test1Big}
\end{center}
\end{figure}

We compare these survival probabilities with those obtained when two banks don't have mutual liabilities. The difference in the corresponding probabilities is shown in Fig.~\ref{TestDif}.
\begin{figure}[!ht]
\begin{center}
\fbox{\includegraphics[width=4 in]{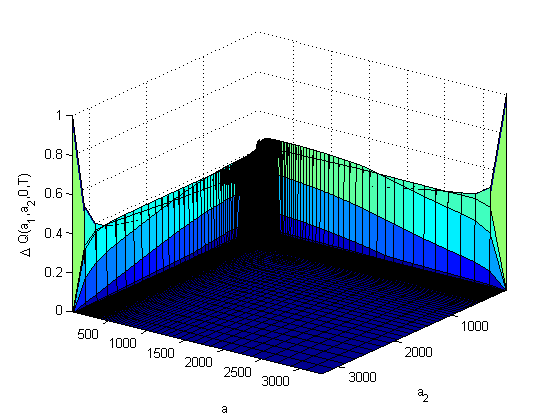}}
\caption{The difference $\Delta Q$ between the joint survival probabilities with and without mutual liabilities.}
\label{TestDif}
\end{center}
\end{figure}

\begin{figure}[!ht]
\begin{center}
\fbox{\includegraphics[width=4.0 in]{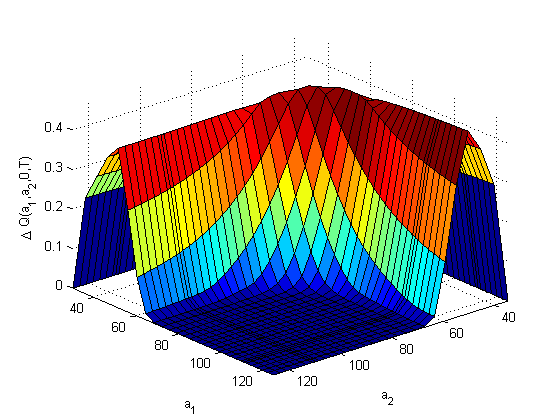}}
\caption{The difference $\Delta Q$, a zoomed-in picture.}
\label{TestDifBig}
\end{center}
\end{figure}

\begin{figure}[!ht]
\begin{center}
\fbox{\includegraphics[width=4.0 in]{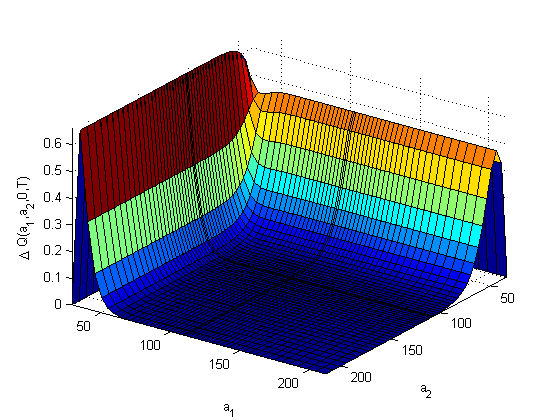}}
\caption{The difference $\Delta Q$ for the pure diffusion case  (the picture is rotated by 180$^\circ$).}
\label{Q2noJumps}
\end{center}
\end{figure}

As expected the maximal difference occurs near default boundaries where the difference could be of order 1. To see how pronounced this effect is, see Fig.~\ref{TestDifBig}. Obviously, the magnitude depends on the values of the jump parameters used in the test as well as on the other parameters of the model and the default boundaries. Also, the effect becomes more pronounced when the ratio of the mutual liabilities to the other liabilities increases.

To emphasize the role of jumps, the same test was conducted without jumps in a pure diffusion setting. The results are shown in Fig.~\ref{Q2noJumps}.  Clearly, the presence of jumps significantly changes the picture, while still preserving the effect of mutual liabilities.

In the second set of tests we setup a local volatility function for assets 1 and 2, which is given in Tables~\ref{locVol1}, \ref{locVol2}
\begin{table}[!ht]
\begin{center}
\ra{1.}
\begin{tabular}{|c|c|c|c|c|c|c|c|c|c|c|c|c|}
\hline
$t, \mbox{yrs}$ & \multicolumn{9}{|c|}{$A_{1,0}$} \\ \cline{2-10}
& 70	& 80 & 90 &	100 & 110 &	120	& 130 &	140 & 150 \\
\hline
0.1 & 0.447 & 0.455 & 0.459 & 0.462 & 0.465 & 0.467 & 0.468 & 0.470 & 0.471\\
\hline
0.2 & 0.500 & 0.507 & 0.511 & 0.514 & 0.516 & 0.518 & 0.519 & 0.520 & 0.522\\
\hline
0.4 & 0.548 & 0.554 & 0.558 & 0.560 & 0.562 & 0.564 & 0.565 & 0.566 & 0.567\\
\hline
0.6 & 0.592 & 0.597 & 0.601 & 0.603 & 0.605 & 0.607 & 0.608 & 0.609 & 0.610\\
\hline
0.8 & 0.632 & 0.638 & 0.641 & 0.643 & 0.645 & 0.646 & 0.648 & 0.649 & 0.650\\
\hline
\end{tabular}
\caption{Local volatility function for $A_{1,t}$.}
\label{locVol1}
\end{center}
\end{table}

\begin{table}[!ht]
\begin{center}
\ra{1.}
\begin{tabular}{|c|c|c|c|c|c|c|c|c|c|c|c|c|}
\hline
$t, \mbox{yrs}$ & \multicolumn{10}{|c|}{$A_{2,0}$} \\ \cline{2-11}
& 50 &	60 & 70	& 80 & 90 &	100 & 110 &	120	& 130 &	140 \\
\hline
0.1 & 0.548 & 0.554 & 0.558 & 0.560 & 0.562 & 0.564 & 0.565 & 0.566 & 0.567 & 0.568 \\
\hline
0.2 & 0.592 & 0.597 & 0.601 & 0.603 & 0.605 & 0.607 & 0.608 & 0.609 & 0.610 & 0.611 \\
\hline
0.4 & 0.632 & 0.638 & 0.641 & 0.643 & 0.645 & 0.646 & 0.648 & 0.649 & 0.650 & 0.650 \\
\hline
0.6 & 0.671 & 0.676 & 0.679 & 0.681 & 0.683 & 0.684 & 0.685 & 0.686 & 0.687 & 0.688 \\
\hline
0.8 & 0.707 & 0.712 & 0.715 & 0.719 & 0.718 & 0.720 & 0.721 & 0.722 & 0.722 & 0.723 \\
\hline
\end{tabular}
\caption{Local volatility function for $A_{2,t}$.}
\label{locVol2}
\end{center}
\end{table}

The results of this test are given in Fig.~\ref{Q2dDif2LocVol}. It can be seen that larger volatilities amplify the effect of mutual liabilities, as well as make a shape of $Q$ highly asymmetric.
\begin{figure}[!ht]
\begin{center}
\fbox{\includegraphics[width=4.0 in]{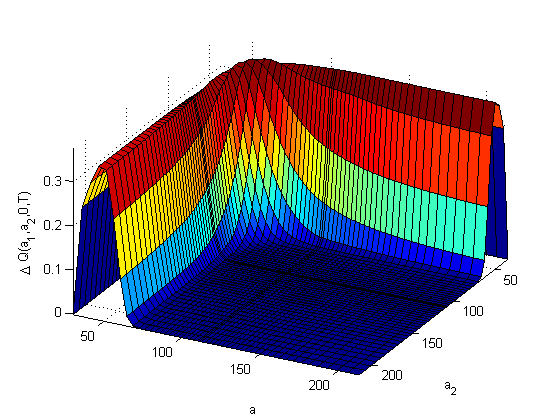}}
\caption{The difference $\Delta Q$ in the presence of local volatility.}
\label{Q2dDif2LocVol}
\end{center}
\end{figure}

We also consider a case of long maturity, $T=10$ years to investigate how the time horizon affects
the shape of the joint survival probability $Q(x_1,x_2,0,T)$
in the presence of mutual liabilities. The corresponding results are shown in Fig.~\ref{Q10}, Fig.~\ref{Rel10}. It is clear that the effect of mutual liabilities significantly decreases when $T$ increases. That is because $Q(x_1,x_2,0,T)$ itself decreases in absolute value with larger $T$, and therefore the absolute value of the effect also drops down.
\begin{figure}[!ht]
\begin{center}
\fbox{\includegraphics[width=4.0 in]{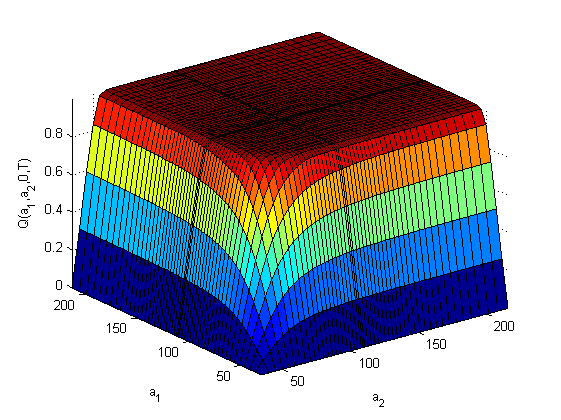}}
\caption{The joint survival probability $Q(a_1,a_2, 0,T)$ at $T=10$ years}
\label{Q10}
\end{center}
\end{figure}

\begin{figure}
\begin{center}
\fbox{\includegraphics[width=4.0 in]{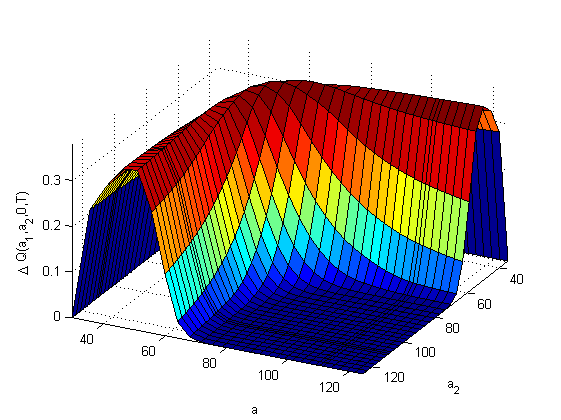}}
\caption{The difference $\Delta Q$ at $T=10$ years with local volatility.}
\label{Rel10}
\end{center}
\end{figure}
The following tests show the influence of correlations on the effects caused by mutual liabilities.   In Fig.~\ref{rho0} the same results as in Test 1 are presented when $\rho=0$, while in Fig.~\ref{loading0} we assume that $b_1 = b_2 = 0$.
\begin{figure}
\begin{center}
\fbox{\includegraphics[width=4.0 in]{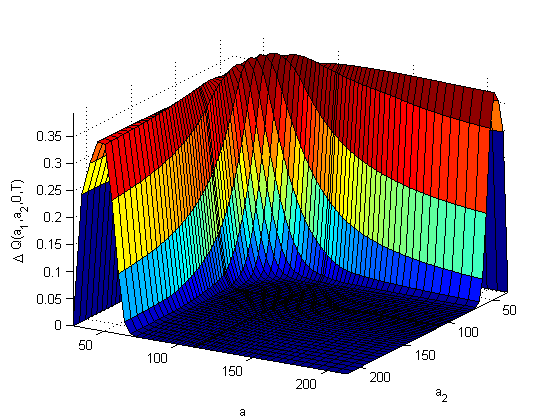}}
\caption{The difference $\Delta Q$ with local volatility and $\rho_{12}=0$.}
\label{rho0}
\end{center}
\end{figure}

\begin{figure}
\begin{center}
\fbox{\includegraphics[width=4.0 in]{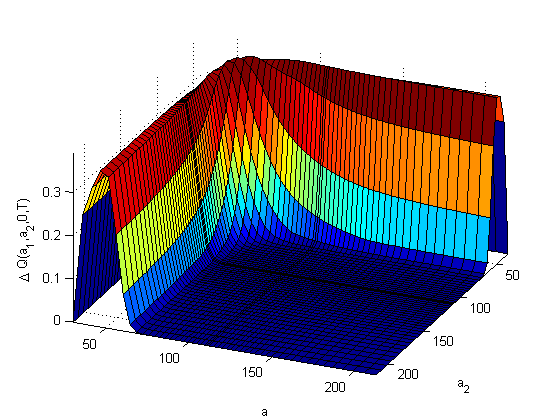}}
\caption{The difference $\Delta Q$ with local volatility and $b_1=b_2=0$.}
\label{loading0}
\end{center}
\end{figure}

These figures show that both contributions of correlations are important.

Fig.~\ref{marg} represents the marginal survival probability of the first bank as a function of the initial asset value of the second bank under the conditions of the first test in Fig.~\ref{Test1}.
And Fig.~\ref{margDif} shows the difference in marginal survival probabilities with and without mutual interbank liabilities.
\begin{figure}
\begin{center}
\fbox{\includegraphics[width=4.0 in]{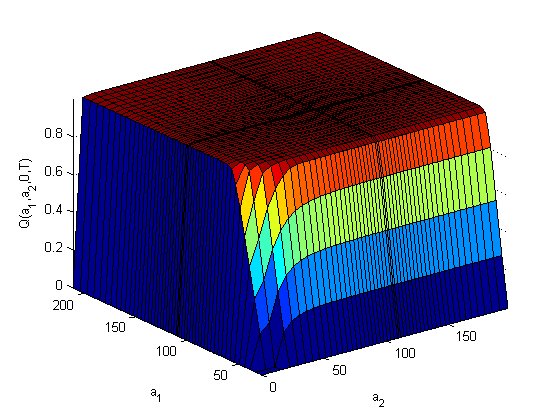}}
\caption{The marginal survival probability $q(x_1,0,T|x_2)$, a zoomed-in picture.}
\label{marg}
\end{center}
\end{figure}

\begin{figure}
\begin{center}
\fbox{\includegraphics[width=4.0 in]{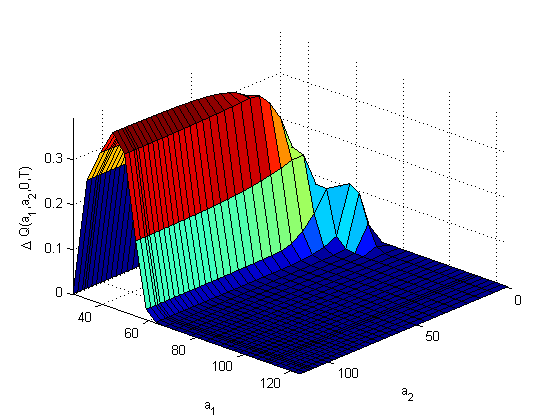}}
\caption{The difference in the marginal survival probabilities $\Delta q(x_1,0,T|x_2)$.}
\label{margDif}
\end{center}
\end{figure}
As could be seen mutual interbank liabilities affect both the marginals and joint survival probabilities. The influence on marginals despite being smaller in magnitude, is still significant.

% % % % % % % % % % % % % % % % % % % % % % %
\section{The three-dimensional case} \label{ne3d}
% % % % % % % % % % % % % % % % % % % % % % %
It is more natural to consider at least three banks, $A_i, \ i=1,2,3$ using the same structural default model as above. Also assume that all three banks have mutual liabilities to each other, as well as liabilities with respect to the outside economy. The advantage of our approach lies in the fact that just minor changes in the computational algorithm need to be done to include the third asset into the whole picture.

Since now $Q = Q(x_1,x_2,x_3, t, T)$, we need to replace the two-dimensional matrices with the three-dimensional ones. Therefore, the expected complexity of the method becomes $O(N_1 N_2 N_3)$. As idiosyncratic jumps are still independent, our splitting algorithm remains the same, although we need to add two more steps in the direction $x_3$ to \eqref{splitAll}. Hence, the 3D splitting algorithm reads
\begin{eqnarray} \label{splitAll3}
Q^{(1)}(x_1,x_2,x_3,\tau) &=& e^{\frac{\Delta \tau}{2} \mathcal{D} } Q(x_1,x_2,x_3,\tau), \\
Q^{(2)}(x_1,x_2,x_3,\tau) &=& e^{\frac{\Delta \tau}{2} \mathcal{J}_1} Q^{(1)}(x_1,x_2,x_3,\tau) \nonumber, \\
Q^{(3)}(x_1,x_2,x_3,\tau) &=& e^{\frac{\Delta \tau}{2} \mathcal{J}_2} Q^{(2)}(x_1,x_2,x_3,\tau) \nonumber, \\
Q^{(4)}(x_1,x_2,x_3,\tau) &=& e^{\frac{\Delta \tau}{2} \mathcal{J}_3} Q^{(3)}(x_1,x_2,x_3,\tau) \nonumber, \\
Q^{(5)}(x_1,x_2,x_3,\tau) &=& e^{\Delta \tau \mathcal{J}_{12}} Q^{(4)}(x_1,x_2,x_3,\tau) \nonumber, \\
Q^{(6)}(x_1,x_2,x_3,\tau) &=& e^{\frac{\Delta \tau}{2} \mathcal{J}_3} Q^{(5)}(x_1,x_2,x_3,\tau) \nonumber, \\
Q^{(7)}(x_1,x_2,x_3,\tau) &=& e^{\frac{\Delta \tau}{2} \mathcal{J}_2} Q^{(6)}(x_1,x_2,x_3,\tau) \nonumber, \\
Q^{(8)}(x_1,x_2,x_3,\tau) &=& e^{\frac{\Delta \tau}{2} \mathcal{J}_1} Q^{(7)}(x_1,x_2,x_3,\tau) \nonumber, \\
Q(x_1,x_2,x_3,\tau + \Delta \tau) &=& e^{\frac{\Delta \tau}{2} \mathcal{D} } Q^{(8)}(x_1,x_2,x_3,\tau).  \nonumber
\end{eqnarray}
In our test experiments at step 5, without loss of generality, we again use the Kou model for the systemic jumps. That requires solving the corresponding 3D linear equations of the first order similar to \eqref{kou1system} and \eqref{kou2system}. The solution could be constructed by using a 3D version of the ADI scheme derived in a similar manner to the 2D case (\cite{McDonough2008}). For the sake of brevity, we formulate two propositions and give just a sketch of the proof since it could be obtained in exactly the same way as in Appendices.

\begin{proposition} \label{PropKou3d_1}
Consider the following PIDE
\begin{equation} \label{PIDEkou3d_1}
(\theta_1- b_1 \triangledown_{x_1} - b_2 \triangledown_{x_2} - b_3 \triangledown_{x_3}) z(x_1,x_2,\tau) = p \theta_1 Q(x_1,x_2,x_3,\tau).
\end{equation}
\noindent and solve it using the following ADI scheme
\begin{align*}
\Big[\Big(s &+ \dfrac{1}{2} \theta_1 \Big) - b_1 \triangledown_{x_1}\Big] z^*({\bf x},\tau) =
\Big[\Big(s - \dfrac{1}{2} \theta_1 \Big) + b_2 \triangledown_{x_2} + b_3 \triangledown_{x_3} \Big]z^k({\bf x},\tau) + b\\
\Big[\Big(s &+ \dfrac{1}{2} \theta_1 \Big) - b_2 \triangledown_{x_2}\Big] z^{**}({\bf x},\tau) =
\Big[\Big(s - \dfrac{1}{2} \theta_1 \Big) + b_1\triangledown_{x_1} + b_3 \triangledown_{x_3}\Big]z^*({\bf x},\tau) + b
\nonumber \\
\Big[\Big(s &+ \dfrac{1}{2} \theta_1 \Big) - b_3 \triangledown_{x_3}\Big] z^{k+1}({\bf x},\tau) =
\Big[\Big(s - \dfrac{1}{2} \theta_1 \Big) + b_1 \triangledown_{x_1} + b_2 \triangledown_{x_2}\Big]z^{**}({\bf x},\tau) + b
\nonumber \\
b &\equiv p \theta_1 Q(x_1,x_2,x_3,\tau) \nonumber
\end{align*}
Then the discrete approximation of this ADI scheme
\begin{align*}
\Big[\Big(s &+ \dfrac{1}{2} \theta_1 \Big)I_{x_1} - b_1 A(x_1)\Big] z^*({\bf x},\tau) =
\Big[\Big(s - \dfrac{1}{2} \theta_1 \Big)I_{x_2} + b_2 A(x_2) + b_3 A(x_3) \Big]z^k({\bf x},\tau) + b\\
\Big[\Big(s &+ \dfrac{1}{2} \theta_1 \Big)I_{x_2} - b_2 A(x_2)\Big] z^{**}({\bf x},\tau) =
\Big[\Big(s - \dfrac{1}{2} \theta_1 \Big)I_{x_1} + b_1 A(x_1) + b_3 A(x_3)\Big]z^*({\bf x},\tau) + b
\nonumber \\
\Big[\Big(s &+ \dfrac{1}{2} \theta_1 \Big)I_{x_3} - b_3 A(x_3)\Big] z^{k+1}({\bf x},\tau) =
\Big[\Big(s - \dfrac{1}{2} \theta_1 \Big)I_{x_1} + b_1 A(x_1) + b_2 A(x_2)\Big]z^{**}({\bf x},\tau) + b
\nonumber \\
b &\equiv p \theta_1 Q(x_1,x_2,x_3,\tau), \qquad A(x_i) =
\begin{cases}
A_2^F(x_i), & b_i > 0 \\
A_2^B(x_i), & b_i < 0, \qquad i=1,2
\end{cases}
\nonumber
\end{align*}
\noindent is unconditionally stable, approximates \eqref{PIDEkou3d_1} with $O\left( \sum_{i,j=1}^3 \Delta x_i \Delta x_j \right)$ and preserves positivity of the solution.
\end{proposition}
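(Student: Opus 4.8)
The plan is to reduce the three assertions — unconditional stability, the $O\!\left(\sum_{i,j=1}^3 \Delta x_i \Delta x_j\right)$ consistency, and positivity preservation — to exactly the three sub-arguments already used in the two-dimensional Propositions~\ref{PropKou1} and~\ref{PropKou2} (Appendices~\ref{ap1},~\ref{ap2}), now carried out sweep by sweep over the three directional sub-problems. The structural difference from the 2D case is merely the presence of a third sweep and of the extra cross-terms $\Delta x_1\Delta x_3$, $\Delta x_2\Delta x_3$ together with a triple product; no new idea is needed, only more bookkeeping.

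\emph{Consistency.} First I would Taylor-expand the one-sided second-order stencils, $A(x_i) = \triangledown_{x_i} + O\!\left((\Delta x_i)^2\right)$, with $A_2^F$ taken when $b_i>0$ and $A_2^B$ when $b_i<0$ so the stencil points into the direction of propagation. At the fixed point of the iteration I set $z^{k+1}=z^k=z$, eliminate the intermediate vectors $z^\ast,z^{\ast\ast}$, and obtain on the left the product $\big[(s+\tfrac12\theta_1)I - b_3 A(x_3)\big]\big[(s+\tfrac12\theta_1)I - b_2 A(x_2)\big]\big[(s+\tfrac12\theta_1)I - b_1 A(x_1)\big]z$ and on the right the analogous product acting on $z$ plus the $b$-term. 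Expanding both sides and exploiting the symmetric splitting of $\theta_1$ across the sweeps, the $O(1)$ and $O(\Delta x)$ parts collapse to $\big(\theta_1 - b_1 A(x_1) - b_2 A(x_2) - b_3 A(x_3)\big)z = p\theta_1 Q$, i.e.\ a second-order discretization of \eqref{PIDEkou3d_1}; the surviving mismatch consists of the products $b_i b_j A(x_i)A(x_j)$ and $b_1 b_2 b_3 A(x_1)A(x_2)A(x_3)$ carrying mesh factors, which are of order $\Delta x_i\Delta x_j$ (the triple term being even higher order). Combined with the $O\!\left((\Delta x_i)^2\right)$ stencil error this yields the claimed $O\!\left(\sum_{i,j=1}^3 \Delta x_i\Delta x_j\right)$. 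The point requiring care here is verifying that the triple product contributes no residual $O(1)$ term, which is where the balanced coefficients $s\pm\tfrac12\theta_1$ are essential.

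\emph{Stability.} Next I would run a von Neumann analysis: substitute a Fourier mode $z^k \sim \xi^k e^{i\,\mathbf{k}\cdot\mathbf{x}}$ and compute the amplification factor $\xi$ of the map $k\mapsto k+1$. It factors as a product of three ratios, one per sweep, each of the form $\dfrac{(s-\tfrac12\theta_1) + b_j \hat A(x_j)}{(s+\tfrac12\theta_1) - b_i \hat A(x_i)}$, where $\hat A(x_i)$ is the Fourier symbol of the relevant one-sided difference. Using that the upwind choice makes $\mathrm{Re}\big(b_i\hat A(x_i)\big)\le 0$ and that $0 < s-\tfrac12\theta_1 < s+\tfrac12\theta_1$ for $s>\tfrac12\theta_1$ (in particular for $s=\theta_1+1$), each factor has modulus $\le 1$, so $|\xi|\le 1$ for every $\mathbf{k}$ and every mesh; hence the iteration converges and the scheme is unconditionally stable. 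This is the verbatim extension of the 2D symbol estimate, with a third factor appended.

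\emph{Positivity and the main obstacle.} Positivity then follows sweep by sweep: $b = p\theta_1 Q \ge 0$ since $Q$ is a survival probability; the right-hand operators $\big[(s-\tfrac12\theta_1)I + b_j A(x_j) + b_k A(x_k)\big]$ are entrywise nonnegative once $s\ge\tfrac12\theta_1$ and the upwind sign conventions hold; and each left-hand operator $\big[(s+\tfrac12\theta_1)I - b_i A(x_i)\big]$ has a nonnegative inverse (boundary rows are adjusted with the given Dirichlet data). The last point is the genuine obstacle, exactly as in the 2D proofs: the second-order one-sided stencils $A_2^F$, $A_2^B$ are not diagonally dominant in the naive sense — each carries one spurious positive off-diagonal entry — so the $M$-matrix criterion does not apply directly, and one must instead reproduce the explicit estimate of Appendix~\ref{ap1} showing the inverse stays nonnegative because that off-diagonal correction is dominated by the leading upwind part. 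Since the three sweeps are each genuinely one-dimensional in their respective $x_i$, this estimate carries over unchanged, so the only real work beyond the 2D case is the re-bookkeeping of cross-terms in the consistency step and the third factor in the symbol estimate.
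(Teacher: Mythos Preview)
Your overall plan---reduce to the 2D argument sweep by sweep---is exactly what the paper does; its proof is literally ``apply the same discretization three times (to each row of the splitting scheme); the remaining part of the proof is exactly the same as in Appendix~\ref{ap1}.'' However, the details you supply diverge from the paper's in two places, and one of them contains a real error.

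\textbf{Stability.} You write that the amplification factor is a product of three ratios ``each of the form $\dfrac{(s-\tfrac12\theta_1) + b_j \hat A(x_j)}{(s+\tfrac12\theta_1) - b_i \hat A(x_i)}$.'' That is not the structure of this 3D scheme: in each sweep the right-hand side carries \emph{two} directional terms, not one, so for instance the first factor is
\[
\frac{(s-\tfrac12\theta_1) + b_2 \hat A(x_2) + b_3 \hat A(x_3)}{(s+\tfrac12\theta_1) - b_1 \hat A(x_1)}.
\]
With the extra term present, the bound $|\cdot|\le 1$ does not follow from the upwind sign condition $\mathrm{Re}(b_j\hat A_j)\le 0$ alone, because the imaginary parts of the two numerator contributions can add. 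So the von Neumann argument as written has a gap. The paper avoids Fourier analysis entirely: since the one-sided stencils $A_2^F$, $A_2^B$ are triangular matrices, the eigenvalues of each sweep operator are read off the diagonals (this is the computation $\lambda_i = \dfrac{s-\tfrac12\theta_1 + 3b_j/h}{s+\tfrac12\theta_1 - 3b_i/h}$ in Appendix~\ref{ap1}), and in 3D the directional operators commute on the tensor grid, so the spectral radius bound is obtained directly. If you want to stick with von Neumann you will need a sharper estimate on the second-order one-sided symbols; the triangular-matrix route is cleaner.

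\textbf{Positivity.} You describe the obstacle as the failure of the $M$-matrix criterion for $A_2^{F/B}$ and say one must ``reproduce the explicit estimate of Appendix~\ref{ap1}.'' The paper's mechanism is not an ad hoc estimate but the EM-matrix framework of \cite{ElhashashSzyld2008} (eventually nonnegative $B$ with $\rho(B)<s$): $M_1=(s+\tfrac12\theta_1)I-b_iA(x_i)$ is shown to be an EM-matrix via Lemma~A.2 of \cite{Itkin2014a}, hence $M_1^{-1}\ge 0$ by Lemma~A.3, while the right-hand matrix is eventually nonnegative under a lower bound on $s$. This is the argument you would actually reproduce three times; invoking it rather than an unspecified ``explicit estimate'' closes the positivity step cleanly.

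Your consistency paragraph is fine and in fact more detailed than the paper, which simply notes that $A_2^{F/B}$ is second order and defers the rest to the 2D case.
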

\begin{proof}
The proof could be obtained following the lines of Proof of Proposition~\ref{PropKou1} given in Appendix~\ref{ap1}. In our situation we apply the same discretization three times (to each row of the splitting scheme). The remaining part of the proof is exactly same as in Appendix~\ref{ap1}.
\end{proof}
The second proposition is similar in nature.
\begin{proposition} \label{PropKou3d_2}
Consider the following PIDE
\begin{equation} \label{PIDEkou3d_2}
(\theta_2 + b_1 \triangledown_{x_1} + b_2 \triangledown_{x_2} + b_3 \triangledown_{x_3}) z(x_1,x_2,\tau) = (1-p) \theta_2 Q(x_1,x_2,x_3,\tau).
\end{equation}
\noindent and solve it using the following ADI scheme
\begin{align*}
\Big[\Big(s &+ \dfrac{1}{2} \theta_2 \Big) + b_1 \triangledown_{x_1}\Big] z^*(x_1,x_2,\tau) =
\Big[\Big(s - \dfrac{1}{2} \theta_2 \Big) - b_2 \triangledown_{x_2} - b_3 \triangledown_{x_3}\Big]z^k(x_1,x_2,\tau) + b \\
\Big[\Big(s &+ \dfrac{1}{2} \theta_2 \Big) + b_2 \triangledown_{x_2}\Big] z^{**}(x_1,x_2,\tau) =
\Big[\Big(s - \dfrac{1}{2} \theta_2 \Big) - b_1 \triangledown_{x_2} - b_3 \triangledown_{x_3}\Big]z^*(x_1,x_2,\tau) + b
\nonumber \\
\Big[\Big(s &+ \dfrac{1}{2} \theta_2 \Big) + b_3 \triangledown_{x_3}\Big] z^{k+1}(x_1,x_2,\tau) =
\Big[\Big(s - \dfrac{1}{2} \theta_2 \Big) - b_1 \triangledown_{x_1} - b_2 \triangledown_{x_2}\Big]z^{**}(x_1,x_2,\tau) + b
\nonumber \\
b &\equiv (1-p) \theta_2 Q(x_1,x_2,x_3,\tau) \nonumber
\end{align*}

Then the discrete approximation of this ADI scheme
\begin{align*}
\Big[\Big(s &+ \dfrac{1}{2} \theta_2 \Big)I_{x_1} + b_1 A(x_1)\Big] z^*(x_1,x_2,\tau) =
\Big[\Big(s - \dfrac{1}{2} \theta_2 \Big)I_{x_2} - b_2 A(x_2) - b_3 A(x_3)\Big]z^k(x_1,x_2,\tau) + b \\
\Big[\Big(s &+ \dfrac{1}{2} \theta_2 \Big)I_{x_2} + b_2 A(x_2)\Big] z^{**}(x_1,x_2,\tau) =
\Big[\Big(s - \dfrac{1}{2} \theta_2 \Big)I_{x_1} - b_1 A(x_1) - b_3 A(x_3)\Big]z^*(x_1,x_2,\tau) + b
\nonumber \\
\Big[\Big(s &+ \dfrac{1}{2} \theta_2 \Big)I_{x_3} + b_3 A(x_3)\Big] z^{k+1}(x_1,x_2,\tau) =
\Big[\Big(s - \dfrac{1}{2} \theta_2 \Big)I_{x_1} - b_1 A(x_1) - b_2 A(x_2)\Big]z^{**}(x_1,x_2,\tau) + b
\nonumber \\
b &\equiv (1-p) \theta_2 Q(x_1,x_2,x_3,\tau), \qquad
A(x_i)  =
\begin{cases}
A_2^B(x_i), & b_i > 0 \\
A_2^F(x_i), & b_i < 0, \qquad i=1,2
\end{cases}
\nonumber
\end{align*}
\noindent is unconditionally stable, approximates \eqref{PIDEkou3d_2} with $O\left( \sum_{i,j=1}^3 \Delta x_i \Delta x_j \right)$ and preserves positivity of the solution.
\end{proposition}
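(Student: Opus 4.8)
The plan is to mirror, essentially step for step, the proof of Proposition~\ref{PropKou2} given in Appendix~\ref{ap2}, carrying out the same three verifications — nonsingularity/stability of the implicit sweeps, order of approximation, and positivity — but now for a three-term, three-sweep Peaceman--Rachford splitting rather than a two-term one. The key simplification I would exploit is that the operator $\theta_2 + b_1\triangledown_{x_1} + b_2\triangledown_{x_2} + b_3\triangledown_{x_3}$ in \eqref{PIDEkou3d_2} has constant coefficients, so the three discrete one-dimensional operators $A(x_1), A(x_2), A(x_3)$ commute pairwise; the whole analysis then reduces to a symbol-by-symbol (Fourier / Toeplitz) computation, in which each matrix identity becomes a scalar identity parameterized by three frequencies, exactly as in the 2D appendix.

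First I would fix notation: the fully discrete scheme replaces $\triangledown_{x_i}$ by $A(x_i)$, with $A(x_i)=A^B_2(x_i)$ when $b_i>0$ and $A(x_i)=A^F_2(x_i)$ when $b_i<0$ — the second-order upwind choice that makes $b_i A(x_i)$ contribute a nonnegative amount to the diagonal of the implicit operator. For each of the three sweeps $k\to *$, $*\to **$, $**\to k+1$, the left-hand operator has the form $L_i = (s+\tfrac12\theta_2)I_{x_i} + b_i A(x_i)$ and the right-hand operator is a scalar multiple of $I$ plus the two remaining explicit terms $-b_j A(x_j)$. I would then (i) show each $L_i$ is nonsingular with bounded inverse, by checking that its symbol $(s+\tfrac12\theta_2) + b_i\,\widehat{A(x_i)}(\omega_i)$ stays bounded away from zero for all $\omega_i$ (the real part of $b_i\widehat{A(x_i)}$ is nonnegative by the upwind choice and $s+\tfrac12\theta_2>0$); and (ii) write the iteration map as $M = L_3^{-1}R_3\,L_2^{-1}R_2\,L_1^{-1}R_1$ and show its symbol has modulus strictly below one for a suitable $s$ (e.g.\ $s=\theta_2+1$, as used in practice), which gives both unconditional stability of the scheme and convergence of the ADI iteration to the unique discrete solution of \eqref{PIDEkou3d_2}; the inhomogeneous term $b$ does not affect the spectral radius of $M$.

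For the order of approximation I would Taylor-expand each $A^{F/B}_2(x_i)$ about its node, using the standard fact that these one-sided stencils are $O((\Delta x_i)^2)$ accurate for $\triangledown_{x_i}$; the remaining error comes from the ADI factorization itself, which — just as for Peaceman--Rachford in the stationary setting — produces the mixed terms $O(\Delta x_i\Delta x_j)$. Collecting everything yields the claimed $O\!\left(\sum_{i,j=1}^3 \Delta x_i\Delta x_j\right)$. For positivity, the free term $b=(1-p)\theta_2 Q\ge 0$ whenever $Q\ge0$, so it suffices to show that each half-step $L_i^{-1}R_i$ maps nonnegative vectors to nonnegative vectors; this follows if $L_i$ has a nonnegative inverse and each explicit $R_i$ is a nonnegative operator, both consequences of the upwind sign-matching together with the magnitude of the shift $s$.

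The hard part will be the positivity step with the \emph{second-order} one-sided differences: because $A^B_2$ and $A^F_2$ carry a $+\tfrac{1}{2h}$ weight two nodes away, the implicit matrices $L_i$ are not diagonally dominant in the naive row-sum sense, so establishing $L_i^{-1}\ge 0$ requires the more delicate argument already used in Appendix~\ref{ap2} — exploiting the specific structure of the stencil and a mild lower bound on $s$ — rather than a one-line M-matrix remark. A secondary technical point is the treatment of the near-boundary rows, where the two-node-wide one-sided stencils must be modified (or reduced to first order); as in the appendices, this affects only $O(1)$ rows and does not alter the stability, accuracy, or positivity conclusions.
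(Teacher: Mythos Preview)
Your plan is in the same spirit as the paper's proof, which is literally a one-line referral: apply the discretization of Appendix~\ref{ap2} (itself a referral to Appendix~\ref{ap1}) three times, once per sweep, and the rest is identical. So at the level of strategy there is no divergence.

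Where you differ is in methodology. The paper's Appendix~\ref{ap1} never invokes Fourier symbols or commutativity of the discrete operators; instead it works purely algebraically with EM-matrices and eventually-nonnegative matrices: the implicit operator $L_i$ is shown to be an EM-matrix (hence $L_i^{-1}\ge 0$), the explicit operator $R_i$ is shown to be eventually nonnegative under a mild lower bound on $s$, and the eigenvalue ratio is computed directly to get the spectral radius below one. This single algebraic framework delivers stability and positivity simultaneously, and it is insensitive to boundary rows. Your Fourier/Toeplitz route is a legitimate alternative for the stability claim, but it buys you less: it requires the periodic/infinite-grid idealization you flag as a ``secondary technical point'', and it does nothing for positivity, so you are forced to reach back into the appendix for that part anyway. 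Note also that your sufficient condition ``$R_i$ is a nonnegative operator'' fails literally for the second-order one-sided stencils (the $\pm 1/(2h)$ weight two nodes away has the wrong sign); the paper sidesteps this precisely via the \emph{eventually} nonnegative notion, which is the ``more delicate argument'' you allude to. If you simply adopt the EM-matrix machinery of Appendix~\ref{ap1} for all three sweeps, the proof collapses to the paper's one sentence.
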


\begin{proof}
The proof is analogous to that given in Appendix~\ref{ap2} if one applies the same discretization three times (to each row of the splitting scheme). The remaining part of the proof is exactly the same as in Appendix~\ref{ap2}, which in turn is analogous to Appendix~\ref{ap1}.
\end{proof}

The solution of the 3D convection-diffusion problem at the first and the last steps of the scheme is more challenging. So far the unconditional stability of some schemes (Craig-Sneid, Modified Craig-Sneid (MCS),  Hundsdorfer-Verwer (HV), etc.) was proven only when there is no drift term in the corresponding diffusion equation (\cite{HoutMishra2013}). Therefore, this problem requires further attention. Nevertheless, these schemes were successfully used in the 3D setup by \cite{Hout3D} where the MCS and HV schemes demonstrated good stability if the scheme parameter $\theta$ was chosen similar to \cite{HoutMishra2013}.

\subsection{Numerical experiments}
In our tests we chose parameters of the model similar to the 2D case, see Tables \ref{Tab3D1}, \ref{Tab3D2}

\begin{table}[!ht]
\begin{center}
\begin{tabular}{|c|c|c|c|c|c|c|c|c|c|}
\hline
$A_{1,0}$ & $A_{2,0}$ & $A_{3,0}$ & $L_{1,0}$ & $L_{2,0}$ & $L_{3,0}$ & $r$ &  $T$
& $\rho_{xz}$ & $\rho_{yz}$ \cr
\hline
110 & 100 & 120 & 80 & 90 & 100 & 0.05 & 1 & 0.5 & 0.3 \cr
\hline
$\phi_{xy}$ & $L_{12,0}$ & $L_{21,0}$ & $L_{13,0}$ & $L_{31,0}$ & $L_{23,0}$ & $L_{32,0}$ &
$R_{1}$ & $R_{2}$ & $R_{3}$  \cr
\hline
$2\pi/5$ & 20 & 15 & 15 & 20 & 10 & 15 & 0.4 & 0.35 & 0.5 \cr
\hline
\end{tabular}
\caption{Parameters of the 3D structural default model.}
\label{Tab3D1}
\end{center}
\end{table}

\begin{table}[!ht]
\begin{center}
\begin{tabular}{|c|c|c|c|c|c|c|c|c|c|c|c|c|}
\hline
$\varphi$ & $\mu_M^{(1)}$ & $\mu_M^{(2)}$ & $\mu_M^{(3)}$ & $\sigma_M^{(1)}$ & $\sigma_M^{(2)}$ & $\sigma_M^{(3)}$ & $p$ & $\theta_1$ & $\theta_2$ & $b_1$ & $b_2$ & $b_3$ \cr
\hline
3 & 0.5 & 0.3 & 0.4 & 0.3 & 0.4 & 0.5& 0.3445 & 3.0465 & 3.0775 & 0.2 & 0.3 & 0.25 \cr
\hline
\end{tabular}
\caption{Parameters of the 3D jump models.}
\label{Tab3D2}
\end{center}
\end{table}

We recall that a correlation matrix $\Sigma$ of $N$ assets can be represented as a Gram matrix
with matrix elements $\Sigma_{ij} = \left<{\bf x}_i,{\bf x}_j\right>$ where ${\bf x}_i, {\bf x}_j$ are unit vectors on a $N-1$ dimensional hyper-sphere $\mathbb{S}^{N-1}$. Using the 3D geometry, it is easy to establish the following cosine law for the correlations between three assets:
\[ \rho_{xy} = \rho_{yz} \rho_{xz} + \sqrt{(1-\rho_{yz}^2)(1-\rho_{xz}^2)}cos(\phi_{xy}) \]
\noindent with $\phi_{xy}$ being an angle between $\bf x$ and its projection on the plane spanned by ${\bf y}, {\bf z}$. As discussed, e.g., by \cite{Dash2004}, three variables $\rho_{yz}, \rho_{zz}, \phi_{xy}$ are independent, but $\rho_{xy}, \rho_{xz}, \rho_{yz}$ are not. Based on the values given in Tab.~\ref{Tab3D1} we find $\rho_{xy} = 0.4053$.

We compute the test using a $50 \times 50 \times 50$ spatial grid for the convection-diffusion problem. Also we use a constant time step $\Delta \tau = 0.025$, so that the total number of time steps for the given maturity is 40. The jump non-uniform grid in each direction is a superset of the convection-diffusion grid up to $A_i = 10^{4}$ built using a geometric progression. So the jumps are computed on the grid with $62 \times 64 \times 63$ nodes. Also we chose $s_i = \theta_i + 1, \ i \in [1,3]$ which provided convergence of the ADI scheme for the common jumps after 4 iterations.

We again compare the survival probability in the presence of mutual liabilities,  $Q^w(A_1, A_2$, $A_3)$, with that in the absence of mutual liabilities, $Q^{wo}(A_1,A_2,A_3)$. To obtain the latter, we first reduce $L_{1,0}, L_{2,0}, L_{3,0}$ by the amounts $L_{ij,0}, \ i \in [1,3], \  j \in [1,3], i \ne j$, and then put $L_{ij,t} = 0$. The difference $\Delta Q = Q^w - Q^{wo}$ is presented in Fig.~\ref{Fig3DXY} - \ref{Fig3DYZ}. Since the whole picture in this case is four-dimensional, we represent it as a series of 3D projections, namely: Fig.~\ref{Fig3DXY} represents the $A_1 - A_2$ plane at various values of the $A_3$ coordinate which are indicated in the corresponding labels; Fig.~\ref{Fig3DXZ} does same in the $A_1 - A_3$ plane, and Fig.~\ref{Fig3DYZ} - in the $A_2 - A_3$ plane.

%\begin{sidewaysfigure}[!ht]
%\begin{landscape}
\begin{figure}[!ht]
\begin{center}
\vspace{-1.in}
\begin{turn}{90}
\includegraphics[width=9.5 in]{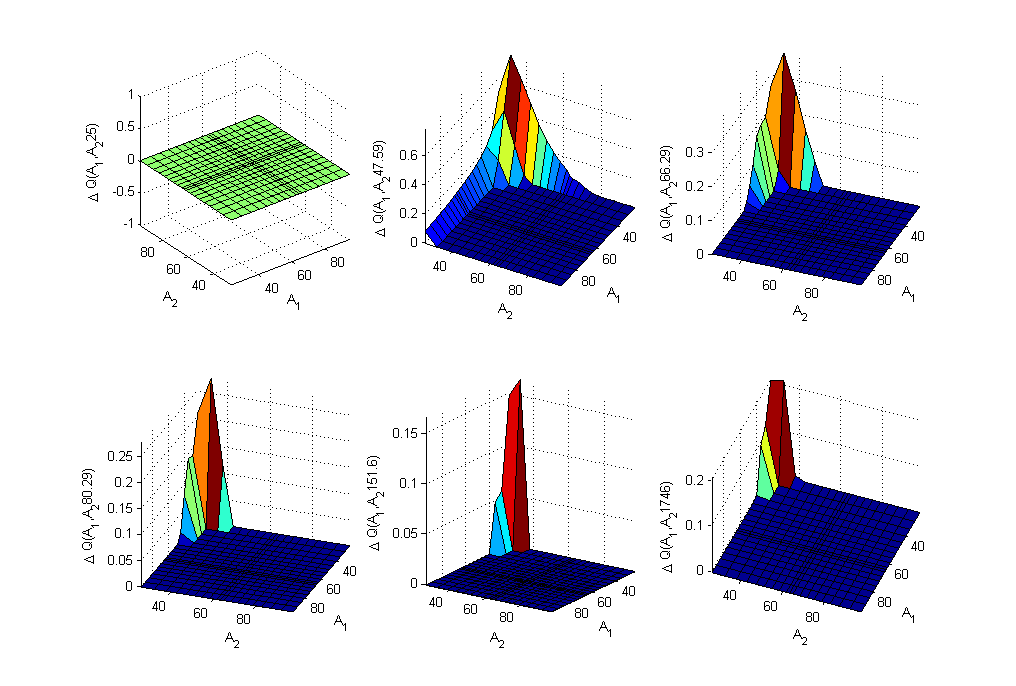}
\end{turn}
\caption{The difference $\Delta Q$ with and without mutual liabilities, $A_1 - A_2$ plane.}
\label{Fig3DXY}
\end{center}
\end{figure}
%\end{landscape}
%\end{sidewaysfigure}

%\begin{sidewaysfigure}[!ht]
\begin{figure}[!ht]
\begin{center}
\vspace{-1.in}
\begin{turn}{90}
\includegraphics[width=9.5 in]{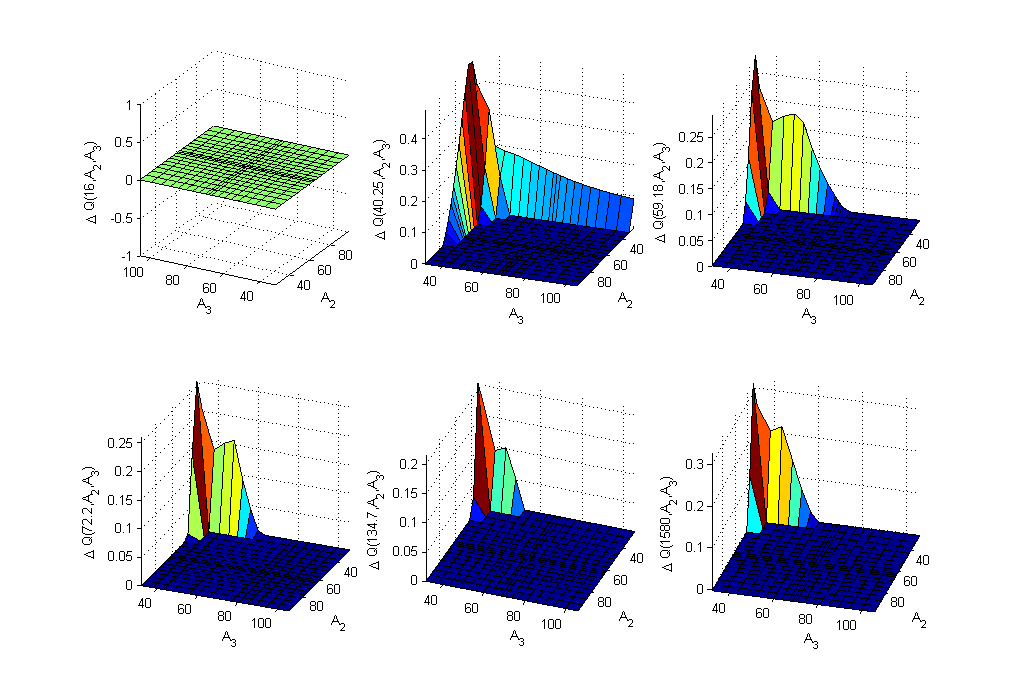}
\end{turn}
\caption{The difference $\Delta Q$ with and without mutual liabilities, $A_1 - A_3$ plane.}
\label{Fig3DXZ}
\end{center}
\end{figure}
%\end{landscape}
%\end{sidewaysfigure}

%\begin{sidewaysfigure}[!ht]
\begin{figure}[!ht]
\begin{center}
\vspace{-1.in}
\begin{turn}{90}
\includegraphics[width=9.5 in]{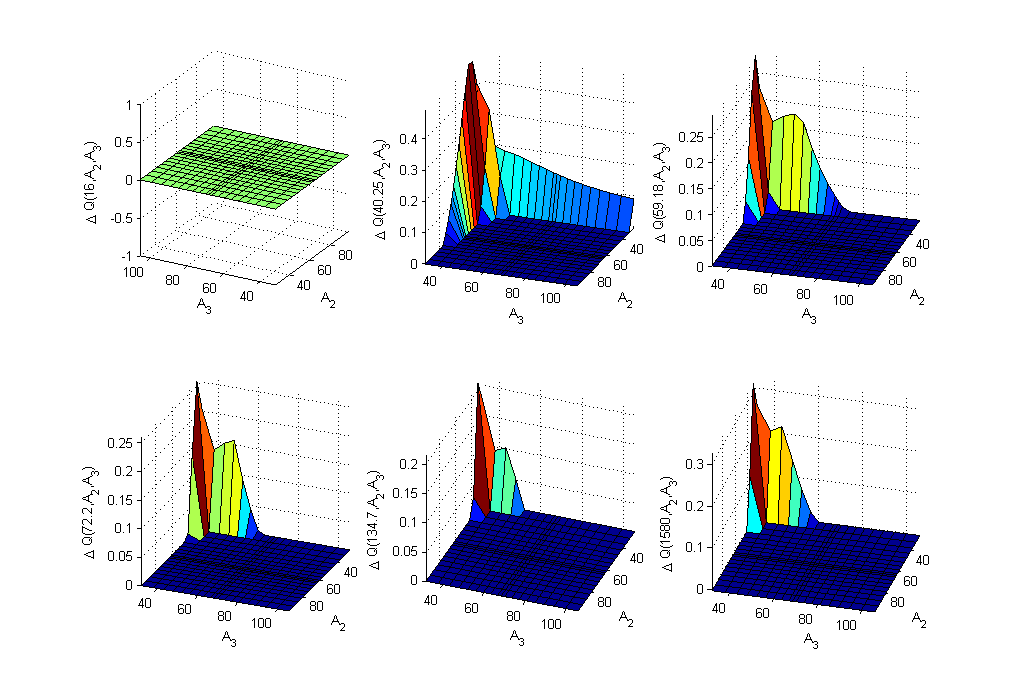}
\end{turn}
\caption{The difference $\Delta Q$ with and without mutual liabilities, $A_2 - A_3$ plane.}
\label{Fig3DYZ}
\end{center}
\end{figure}
%\end{landscape}
%\end{sidewaysfigure}

Two observations could be made based on the results obtained in these tests. First, when three banks have mutual liabilities, their effect on the joint survival probability is more profound than in the 2D case. Second, $\Delta Q$ has an irregular shape as a function of 3 coordinates. For instance, in the $A_2 - A_3$ plane it has two local maxima (in the absolute value) while in the 2D case it doesn't demonstrate such a behavior. Also this effect disappears in the absence of jumps. This is similar to the effect observed in \cite{ItkinCarrBarrierR3} where asymmetric positive and negative jumps in the stochastic skew model were described by the CGMY model with different $\alpha$, which produced a qualitatively new effect. It is evident through the appearance of a big dome close to the ATM at the moderate values of the instantaneous variance $v$ in addition to a standard arc of the double barrier options which is also close to the ATM, but at small values of $v$.

As expected, the whole picture is rather complicated. Moreover, as it is affected by the number of model parameters, which could be difficult to extract from a set of liquid market data, it could be
very challenging to calibrate such a model. A standard recipe is to first calibrate marginals of the distribution to the corresponding market data, and then use some other data for calibration of the remaining parameters.

% % % % % % % % % % % % % % % % % % % % % % % % % %
\section{Conclusions} \label{conclusion}
% % % % % % % % % % % % % % % % % % % % % % % % % %

In this paper we presented three main innovations which seem to be rather general, namely:
\begin{enumerate}
\item We introduced mutual banks' liabilities into the structural default model. We discussed how these liabilities affect joint and marginal survival probabilities, and provided some numerical test results. These results demonstrate that the effect of mutual liabilities could be quite significant. Of course, the magnitude of the effect depends on how close the initial asset values are to the default barrier, and parameters which describe the assets' dynamics, such  as volatility, etc. These parameters, in principle, could be found by calibrating marginal survival probabilities to market CDS spreads.

\item To make the above analysis tractable we developed a solution scheme for the model considering a set of banks with mutual interbank liabilities whose assets are driven by correlated L\'evy processes.
For every asset, the jumps are represented as a weighted sum of the common and idiosyncratic parts. Both parts could be simulated by an arbitrary L\'evy model which is an extension of the previous approaches where either the discrete or exponential jumps were considered, or a L\'evy copula approach was utilized. We provided a novel efficient (linear complexity in each dimension) numerical (splitting) algorithm for solving the corresponding 2D and 3D jump-diffusion equations, and proved its convergence and second order of accuracy in both space and time.

\item The joint survival probability of three firms $Q(x_1,x_2,x_3,0,T)$ was computed using the above framework. To the best of our knowledge there were no the similar results reported in literature. We found that in some cases, the difference between the joint survival probabilities with and without mutual liabilities has a bimodal profile in some projections, and this effect disappears in the pure diffusion setup. This is similar to what was observed in \cite{ItkinCarrBarrierR3} where interaction of jumps also produced a bimodal distribution for double barrier option prices.
\end{enumerate}

Despite the fact that the present approach is efficient and attractive in low dimensions, it is not clear how best to extend it to the case when the number of firms is more than three, unless some simplifications are introduced into the model. This is a standard limitation of the FD approach which experiences the curse of dimensionality. A possible way to overcome this could be to combine the analytical and numerical methods, similar to how this was done in, e.g., \cite{LiptonSavescu2014}.

\clearpage
%%%%%%%%%%%%%%%%%%%%%%%%%%%%%%%%%%%%%%%%%%%%%%%%%%%%%%%%%%%%
\section*{Acknowledgments}
%%%%%%%%%%%%%%%%%%%%%%%%%%%%%%%%%%%%%%%%%%%%%%%%%%%%%%%%%%%%
We thank Peter Carr, Darrel Duffie, Peter Forsyth, Igor Halperin and Rajeev Virmani for useful comments. We assume full responsibility for any remaining errors.

%==================================================
%\def\myBib{C:/AndreyItkin/Documents/Papers/aitkin_fin}
%\def\myBib{C:/AndreyItkin/MySettings2011/aitkin_fin}
%%\def\myBib{aitkin_fin}
%\bibliographystyle{apalike}
%\bibliography{\myBib}

\newcommand{\noopsort}[1]{} \newcommand{\printfirst}[2]{#1}
  \newcommand{\singleletter}[1]{#1} \newcommand{\switchargs}[2]{#2#1}

%%%%%%%%%%%%%%%%%%%%%%%%%%%%%%%%%%%%%%%%%%%%%%%%%%%%%%%%%%%%
\clearpage
\appendix

\section{Proof of Proposition~\protect{\ref{PropKou1}}} \label{ap1}
Following \cite{ElhashashSzyld2008}, we introduce definition of an EM-matrix
\begin{definition}
An $N\times N$ matrix $A = [a_{ij}]$ is called an EM-Matrix if it can be represented as $A = sI - B$ with $0 < \rho(B) < s$, $s > 0$ is some constant, $\rho(B)$ is the spectral radius of $B$, and $B$ is an eventually nonnegative matrix.
\end{definition}

Now suppose $b_1 < 0, \ b_2 < 0$. Then the matrix
\[ M_1 = \Big[\Big(s + \dfrac{1}{2} \theta_1 \Big)I_{x_1} - b_1 A^B_2(x_1)\Big] \]
in the first row of \eqref{adi1d} is an EM-matrix, see Lemma A.2 in \cite{Itkin2014a}. Therefore, the inverse of $M_1$ is a non-negative matrix, see Lemma A.3 in \cite{Itkin2014a}.

The matrix
\[ M_2 = \Big[\Big(s - \dfrac{1}{2} \theta_1 \Big)I_{x_2} + b_2 A_2^B(x_2)\Big] \]
is an eventually non-negative matrix\footnote{By definition of $A_2^B$ the matrix $M_2$ is a lower triangular matrix with three non-zero diagonals. The main and the first lower diagonals are positive and the second lower diagonal is negative. However, the former two dominate the latter one.} if $s$ is chosen to provide
\begin{equation} \label{conv1}
s > \dfrac{1}{2} \theta_1 - b_2 \dfrac{3}{h}
\end{equation}
Therefore, the solution of the first row of \eqref{adi1d} is $z^*(x_1,x_2,\tau) = M_1^{-1}\left[M_2 z^k(x_1,x_2, \tau) + b\right]$ which by construction is a non-negative vector.
Also eigenvalues of $M_1^{-1}M_2$ are
\[ \lambda_i = \dfrac{s - \dfrac{1}{2} \theta_1 + 3 b_2/h }{s + \dfrac{1}{2} \theta_1 - 3 b_1/h } < 1, \qquad i \in [1,N_1] \]
Therefore, this scheme converges unconditionally provided \eqref{conv1} is satisfied.

Also, by construction  the matrix $A_2^B(x)$ approximates the operator $\triangledown_x$ to the second order, i.e., with $O(h^2)$. Therefore, the whole scheme provides the second order approximation.

The second row of \eqref{adi1d} could be analyzed in the same way.

In all other cases $b_1 < 0, b_2 > 0$, $b_1 > 0, b_2 < 0$ and $b_1 > 0, b_2 > 0$ the proof could be done by analogy.

\section{Proof of Proposition~\protect{\ref{PropKou2}}} \label{ap2}
The proof is completely analogous to that in Appendix~\ref{ap1}.

\section{Matrix exponential approach for exponential jumps} \label{ap4}
In the 1D case we still want to use the splitting algorithm of \eqref{splitFin}. To proceed, let us define an explicit model for jumps, so the pseudo-differential operator $\mathcal{J}$ defined in \eqref{intGen} could be computed explicitly.

Let us consider only negative exponentially distributed jumps\footnote{For the positive jumps this could be done in a similar way. The denominator in \eqref{expScheme} then changes to $\phi-1$ and the term $\phi I + a A^B_2$ changes to $\phi I - a A^F_2$ where $\phi > 1$.}, see \cite{Lipton2002a}, i.e.
\begin{equation} \label{expJ}
\nu(J) =
\begin{cases}
\phi e^{\phi J}, & J \le 0 \cr
0, & J > 0,
\end{cases}
\end{equation}
\noindent where $\phi > 0$ is the parameter of the exponential distribution. With the L\'evy measure $\nu(dy)$ given in \eqref{expJ} and the intensity of jumps $\lambda \ge 0$  we can substitute $\nu(dy)$ into \eqref{intGen} and integrate. The result reads
\begin{equation} \label{expJI}
\mathcal{J} = \dfrac{\lambda}{\phi+1}(\phi + \triangledown_x)^{-1} (\triangledown_x^2 - \triangledown_x), \qquad \triangledown_x \equiv \partial_x.
\end{equation}
Below for simplicity of notation we introduce $a \equiv A_1$. Since $x = \log a$, the above expression could be re-written as
\begin{equation} \label{expJI1}
\mathcal{J} = \dfrac{\lambda}{\phi+1}(\phi + a \triangledown)^{-1} a^2\triangledown^2, \qquad \triangledown \equiv \partial_a.
\end{equation}

\begin{proposition} \label{PropExp}
Consider the following discrete approximation of \eqref{expJI1}:
\begin{align} \label{expScheme}
J = \dfrac{\lambda}{\phi+1}(\phi I + a A^B_2)^{-1} a^2 A^C_2.
\end{align}
Then this scheme is a) unconditionally stable; b) approximates the operator $\mathcal{J}$ in \eqref{expJI1} on a certain non-uniform grid in variable $a$ with $O(\max(h_i)^2)$, where $h_i, \ i=1,...,N$ are the steps of the grid; c) and preserves positivity of the solution.
\end{proposition}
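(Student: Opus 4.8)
The plan is to follow the blueprint of Appendix~\ref{ap1}, collapsing the matrix-exponential step generated by $J$ into a single banded linear solve and then invoking the EM-matrix lemmas of \cite{Itkin2014a}. Set $c=\lambda/(\phi+1)>0$, $M=\phi I+aA^B_2$ and $N=a^2A^C_2$, so that $J=cM^{-1}N$. Advancing the pure-jump equation $Q_\tau=\mathcal{J}Q$ over one step by the $(1,1)$-Pad\'e approximant \eqref{pade} gives, writing $Q^{n}=Q(a,\tau)$ and $Q^{n+1}=Q(a,\tau+\Delta\tau)$, the update $Q^{n+1}=(I-\tfrac12\Delta\tau J)^{-1}(I+\tfrac12\Delta\tau J)Q^{n}$. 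Left-multiplying both Pad\'e factors by $M$, the intermediate $M^{-1}$ cancels and the step reduces to
\[
\Big(\phi I+aA^B_2-\tfrac12\Delta\tau c\,a^2A^C_2\Big)Q^{n+1}=\Big(\phi I+aA^B_2+\tfrac12\Delta\tau c\,a^2A^C_2\Big)Q^{n},
\]
which is precisely why, in \eqref{expScheme}, the inverse stands to the left of the second-derivative operator; for one-sided positive jumps the analogous reduction holds with the replacements $A^B_2\mapsto -A^F_2$, $\phi+1\mapsto\phi-1$ noted in the text. Claim~(b) is then immediate: by construction $A^B_2$ and $A^C_2$ are the second-order one-sided and central approximations of $\triangledown=\partial_a$ and $\triangledown^2$, so $J=\mathcal{J}+O(\max_i h_i^2)$ on the graded mesh once the stencil coefficients are replaced by their standard non-uniform analogues (cf.~\cite{HoutFoulon2010}), while the Pad\'e step contributes the usual $O(\Delta\tau^3)$ local error.

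For claims~(a) and~(c) I would work with $M_-:=\phi I+aA^B_2+\tfrac12\Delta\tau c\,a^2(-A^C_2)$, the matrix to be inverted at each step. Its diagonal is strictly positive; $-A^C_2$ supplies the discrete-Laplacian sign pattern (positive diagonal, negative nearest neighbours); and, since $a>0$, the term $aA^B_2$ supplies a positive main diagonal, a negative first sub-diagonal and a positive second sub-diagonal, so that $M_-=sI-B$ with $s$ an upper bound for the diagonal of $M_-$ and $B$ eventually nonnegative with $\rho(B)<s$ — exactly the situation of Lemma~A.2 of \cite{Itkin2014a}, whence $M_-^{-1}\ge 0$ by Lemma~A.3. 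Positivity of the update then follows by splitting the Pad\'e transition operator (if necessary combined with a scaling-and-squaring refinement as in \cite{Itkin2014,Itkin2014a}) into elementary steps, each a product of a nonnegative matrix with a nonnegative vector; the refinement is needed because the right-hand factor $M_+$ is \emph{not} entrywise nonnegative on fine grids. For unconditional stability I would localize the spectrum of $J$: the symbol of the continuous operator $\mathcal{J}$ on the Fourier mode $e^{i\xi x}$, $x=\log a$, equals $c(-\xi^2-i\xi)/(\phi+i\xi)$, whose real part is $-\lambda\xi^2/(\phi^2+\xi^2)\le 0$; the EM-matrix structure of $M$ and $-A^C_2$ keeps the eigenvalues of the discrete $J$ in the closed left half-plane, and since the $(1,1)$-Pad\'e approximant is $A$-stable the transition operator has spectral radius at most $1$ for every $\Delta\tau$.

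The main obstacle I anticipate lies at the interface between (a) and (c): one must verify that the dissipative symbol and the entrywise sign structure of the continuous generator genuinely survive the discretization $cM^{-1}N\mapsto c(\phi I+aA^B_2)^{-1}a^2A^C_2$ on a non-uniform mesh, where $M^{-1}$ is only \emph{eventually} nonnegative, so that the product $M^{-1}N$ is not manifestly essentially nonnegative and the eigenvalue localization is not a one-line consequence of combining spectra. Reconciling this with the $(1,1)$-Pad\'e update forces the splitting/squaring device above, and the routine-but-careful bookkeeping is to confirm it does not degrade the second-order accuracy of~(b). The remaining ingredients — the EM-matrix lemmas, the $A$-stability of the Pad\'e approximant, and the consistency of the one-sided and central stencils — are quotable from the cited work.
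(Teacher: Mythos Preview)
Your proof departs from the paper's at exactly the point you flag as an obstacle, and the paper's route avoids that obstacle entirely. Rather than analysing the Pad\'e factors $M_\pm$ separately, the paper works at the level of the generator $J$ itself. It observes that $M=\phi I+aA^B_2$ is an EM-matrix (Lemma~A.2 of \cite{Itkin2014a}), hence $M^{-1}\ge 0$; that $a^2A^C_2$ is Metzler; and then invokes the algebraic fact that a product of a nonnegative matrix with a Metzler matrix is the negative of an EM-matrix. Thus $J$ itself is the negative of an EM-matrix, and both unconditional stability and positivity of $e^{\Delta\tau J}$ (hence of any consistent Pad\'e approximant) follow in one stroke from the main Theorem of \cite{Itkin2014a}. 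No scaling-and-squaring device, no separate spectral localization argument, is needed.

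By contrast, your plan pushes the structural analysis to $M_-$ and $M_+$ after multiplying through by $M$. This is where the difficulty you identify is genuine: $M_+=\phi I+aA^B_2+\tfrac12\Delta\tau c\,a^2A^C_2$ is not entrywise nonnegative on fine grids, so the update is not manifestly positivity-preserving, and your proposed fix (scaling-and-squaring) would have to be argued carefully not to spoil the order. Your stability argument is also weaker than you need: the Fourier symbol computation is for the \emph{continuous} operator in $x=\log a$, and the passage ``the EM-matrix structure of $M$ and $-A^C_2$ keeps the eigenvalues of the discrete $J$ in the closed left half-plane'' is precisely the step that requires proof --- and is exactly what the product lemma (nonnegative $\times$ Metzler $=$ negative of EM-matrix) delivers. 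Once you have that $J$ is the negative of an EM-matrix, the spectrum of $J$ lies in the left half-plane and $e^{\Delta\tau J}\ge 0$ by the cited theorem, so both (a) and (c) follow without the detour. Your treatment of (b) matches the paper's.
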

\begin{proof}
For the sake of clarity we give the proof for the uniform grid, as an extension to the non-uniform grid is straightforward.

As shown in \cite{Itkin2014a} the matrix $A^B_2$ is an EM matrix. Therefore, the matrix $\phi I + a A^B_2$ is also an EM-matrix. Therefore, its inverse is a non-negative matrix. The matrix $A^C_2$ by construction is the Metzler matrix. A product of the non-negative and Metzler matrices is the negative of an EM-matrix\footnote{Some care should be taken regarding the boundary values of $A^C_2$ to guarantee this. Usually, introduction of ghost points at the boundaries helps to increase the accuracy of the method. Alternatively, one could use another approximation of the term $(\phi + a \triangledown)^{-1}$ in \eqref{expJI1} which is $(\phi I + a A^B)^{-1}$. This reduces the order of approximation from the exact second order to some order in between 1 and 2, but, at the same time, significantly improves the properties of the resulting matrix $J$.}. As $\phi > 0$, the matrix $J$ is also the negative of an EM-matrix. Then unconditional stability and positivity of the solution follows from the main Theorem in \cite{Itkin2014a}. As the matrix $A^F_2$ is the second order approximation in $h$ to $\triangledown$, and $ A^C_2$ is the second order approximation in $h$ to $\triangledown^2$, the whole scheme approximates the operator $\mathcal{J}$ with the second order in $h$.
\end{proof}

In practical applications the complexity of this scheme could be linear in the number of grid nodes $N$. Indeed, suppose we wish to compute $Q$ with the second order of approximation in the time step $\Delta t$, i.e. with the accuracy $O((\Delta t)^2)$. Represent $e^{\Delta t \lambda J}$ in the second step of the splitting algorithm \eqref{splitFin} using a Pad\'e rational approximation $(1,1)$:
\[ e^{\Delta t \lambda J} = \left(I - \dfrac{1}{2}\lambda \Delta t J\right)^{-1}\left(I + \dfrac{1}{2}\lambda \Delta t J\right) \]
With allowance for \eqref{expScheme} after some algebra this could be re-written in the form
\begin{equation} \label{Pade}
\left[\phi I - a A^F_2 - \dfrac{\lambda}{2(\phi-1)} \Delta t a^2 A^C_2\right] Q(a,t+\Delta t) =
\left[\phi I - a A^F_2 + \dfrac{\lambda}{2(\phi-1)} \Delta t a^2 A^C_2\right] Q(a,t).
\end{equation}
Matrices in square brackets are banded (three or five diagonal), therefore this system of linear equations could be solved with the complexity $O(N)$.

\end{document}